\DeclareMathOperator{\erf}{erf}
\DeclareMathOperator{\cov}{Cov}
\newcommand{\stab}{{$\alpha$-stable }}
\newtheorem{theorem}{Theorem}
\newtheorem{proposition}[theorem]{Proposition}
\def\bP{\mathbb{P}}
\def\bR{\mathbb{R}}
\def\bN{\mathbb{N}}
\def\cF{\mathcal{F}}
\newcommand{\1}{\mathbbm{1}}
\begin{document}

\title[Goodness-of-fit tests for the one-sided L\'evy distribution based on quantile conditional moments]{Goodness-of-fit tests for the one-sided L\'evy distribution based on quantile conditional moments}

\author{Kewin P\k{a}czek$^{\ast,\S}$}
\author{Damian Jelito$^{\ast}$}
\author{Marcin Pitera$^{\ast}$}
\author{Agnieszka Wy\l{}oma\'{n}ska$^{\dagger}$}
\address{$^{\ast}$Institute of Mathematics, Jagiellonian University, S. {\L}ojasiewicza 6, 30-348 Krak{\'o}w, Poland}
\address{$^{\dagger}$Faculty of Pure and Applied Mathematics, Hugo Steinhaus Center, Wroc{\l}aw University of Science and Technology, Wyspia{\'n}skiego 27, 50-370 Wroc{\l}aw, Poland}
\address{$^{\S}$Corresponding author}
\email{kewin.paczek@im.uj.edu.pl, damian.jelito@uj.edu.pl, marcin.pitera@uj.edu.pl, agnieszka.wylomanska@pwr.edu.pl}

\maketitle

\vspace{-1cm}
\begin{abstract}
In this paper we introduce a novel statistical framework based on the first two quantile conditional moments that facilitates effective goodness-of-fit testing for one-sided L\'evy distributions. The scale-ratio framework introduced in this paper extends our previous results in which we have shown how to extract unique distribution features using conditional variance  ratio for the generic class of \stab distributions. We show that the conditional moment-based goodness-of-fit statistics are a good alternative to other methods  introduced in the literature tailored to the one-sided L\'evy distributions.  The usefulness of our approach is verified using an empirical test power study. For completeness, we also derive the asymptotic distributions of the test statistics and show how to apply our framework to real data. 

\vspace{0.2cm}

\noindent {\it Keywords: \stab distribution,  one-sided L\'evy distribution, quantile conditional mean,  quantile conditional variance, estimation,  goodness-of-fit testing }\\
\noindent {\it MSC2020: 62F03,  62F05, 60E07, 62P35}  
\end{abstract}

%%%%%%%%%%%%%%%%%%%%%%%%%%%%%%%%%%%%%%%%%%%%%%%%%%%%%%%%%%%%%%%%%%%%%%%%%%%%%%%%%%%%%%%%%%%%%%%%%%%%%%
\section{Introduction}

%%%%%% COMMENT %%%%%%%%%%%%%%%%%%%%
%MP COMMENT: No need to introduce so many citation on alpha-stable distribution as our focus in only on one-sided Levy..
%%%%%%%%%%%%%%%%%%%%%%%%%%%%%%%%%%%

%The one-sided L\'evy distribution is a member of \stab distributions that were introduced in \cite{levy1924theorie}. In the last 30 years the class of \stab distributions have found numerous applications linked e.g. to financial markets, telecommunications, condition monitoring, physics, biology, medicine, and climate dynamics, see \cite{fin_new1,fin_new2,mon_new1,mon_new2,mon_new3,phys_new1,phys_new2,phys21,phys22,phys23}, and references therein. We refer to \cite{shao22,alek_book,non_gauss,Nolan2020,biol_new1,biol_new2,phys35,phys33} for classical positions on \stab distributions and \stab processes.

The one-sided L\'evy distribution is a subclass of \stab distributions that were introduced in \cite{levy1924theorie}, see \cite{Nolan2020} for a recent survey on this class of distributions. This subclass is a popular modelling choice in many disciplines including physics and natural sciences, see \cite{Tsa1997,Tsa1995,Tsa1999}, evolutionary programming, see \cite{Yao2004}, photonics, see \cite{Gom2016}, and image processing, see \cite{Ran2015}, see also \cite{Rog2008} and references therein. The popularity of the one-sided L\'evy distribution as a modelling choice could be explained by the fact that -- together with Cauchy and Gaussian distribution -- it is the only \stab distribution for which the probability density function  and the cumulative distribution function could be explicitly stated in analytical form, see \cite{stable}, \cite{Ahs2019}, and \cite{Rue1998} for details. 

Despite its numerous interesting applications, the usage of one-sided L\'evy distribution presents some modelling challenges. The most crucial one is the lack of the finite expected value and variance due to the heavy right tail. Because of that, the classical approaches, e.g. the method of moments, cannot be directly applied and special treatment is required. In the context of goodness-of-fit testing, a few methods tailored specifically to one-sided L\'evy distributions have been developed. 
Recent contributions includes a {\it jackknife empirical likelihood ratio} method, {\it Stein’s Characterization based} method, {\it scale ratio estimation} method, and {\it Laplace transform-based approach}. For details, we refer to \cite{BhaKat2020}, \cite{kumari2023}, \cite{KumBha2022}, \cite{Lukic2023}, and references therein.

Most of the  one-sided L\'evy distribution goodness-of-fit methods proposed in the literature overcome the problem of infinite moments by introducing special sample transforms. In this paper, we introduce an alternative, more direct framework, based on quantile sample conditioning. Namely, we show how to construct effective goodness-of-fit test statistics based on quantile conditional mean ratio or quantile conditional variance ratio. Our approach could be also used to refine the existing frameworks using ensemble methods.

The statistical framework introduced in this paper could be seen as a tailored extension of the quantile conditional variance method introduced in~\cite{PacJelPitWyl2022}, where a fitting framework for a generic class of \stab distributions was considered. The method introduced in this paper also complements the analysis presented in \cite{PitCheWyl2021}, where it has been shown that the algorithms based on quantile conditional variance statistic could be used for effective estimation and testing of the general class of \stab distributions, and in \cite{JelPit2018}, where a  specific goodness-of-fit test statistic that measures tail heaviness in reference to Gaussian distribution has been introduced. In a more general context, the quantile conditional moments-based methods have been recently studied e.g. in the context of independence characterisation or to explain empirical phenomena such as the 20/60/20 rule, see \cite{JawJelPit2022} and \cite{JawPit2015}.
The conditional moment-based  methodologies have been also recently applied to various practical problems including local damage detection based on the signals with heavy-tailed background noise or to study the asymptotic behaviour of empirical processes, see \cite{HebZimWyl2020,HebZimPitWyl2019,Ghoudi2018}.

In the present paper, the quantile conditional moment approach is proposed for the estimation of the scale parameters of the one-sided L\'evy distribution and later used to construct ratio-based statistic that refines known algorithms for the goodness-of-fit testing. Although the theoretical moments (including expected value) for the one-sided L\'evy distribution are infinite, the corresponding conditional moments always exist and can be used to characterize this distribution, see  \cite{JawPit2020}. Thus, the proposed methodology in a natural way can be considered as the generalisation of the method of moments applied to infinite-variance distributions. We demonstrate that the proposed approach improves the effectiveness of the goodness-of-fit testing when applying ratio statistics. The effectiveness is tested for plenty of known distributions that are considered as difficult to distinguish from one-sided L\'evy distribution. The results are compared to the approach recently introduced in \cite{KumBha2022} using both simulated and real data samples. The new methodology based on the quantile conditional moments seems to be more effective for small sample sizes which is the most challenging case in real applications.   Finally,  the new approach is intuitive and easy to implement, which is a crucial aspect of real data analysis. It is worth noting that in \cite{Lukic2023} the conditional variance ratio  has been proposed for the one-sided L\'evy distribution goodness-of-fit testing. That said, the introduced statistic was indirectly assuming symmetry and did not incorporate information about the pre-fixed location, which substantially reduced the statistical power of this test. This was due to the fact that the authors took the test statistic introduced \cite{PitCheWyl2021} in which the authors considered symmetric distributions in the scale-location invariant setting; the corresponding test statistic was tailored to this situation rather than to highly asymmetric one-sided L\'evy setting in which the location is fixed. Consequently, our paper could be also seen as a refinement of the method introduced in \cite{PitCheWyl2021} that accounts for asymmetry and different location processing.

This paper is organised as follows. In Section~\ref{S:preliminaries}  we introduce the one-sided L\'{e}vy distribution, set up the notation, and provide analytical formulas for quantile conditional mean and variance. We also introduce the sample quantile moments and quantile conditional moments-based statistics used for scale parameter estimation. In Section \ref{sec3} we discuss statistical goodness-of-fit tests for one-sided L\'{e}vy distribution and propose the ratio test statistics based on quantile conditional moments. Moreover, we discuss the probabilistic properties of the introduced test statistics. Next, in Section \ref{sec4} we demonstrate the effectiveness of the quantile conditional moments-based approach for testing the one-sided L\'evy distribution. Here we discuss the power of the introduced tests for numerous alternative distributions and various sample sizes. In Section \ref{sec5} we introduce the goodness-of-fit test with the location-invariant test statistic. We demonstrate the simulation study to show the effectiveness of the proposed approach. Section \ref{sec6} contains the real data application. The last section concludes the paper.

%%%%%%%%%%%%%%%%%%%%%%%%%%%%%%%%%%%%%%%%%%%%%%%%%%%%%%%%%%%%%%%%%%%%%%%%%%%%%%%%%%%%%%%%%%
\section{Preliminaries}\label{S:preliminaries}
Let $(\Omega,\cF,\bP)$ be a probability space. We say that a random variable $X$ follows the {\it one-sided L\'{e}vy distribution} with scale parameter $c>0$, and write $X\sim Lv(c)$, if its cumulative distribution function (CDF) is given by
\begin{equation}\label{eq:Levycdf}
F(x;c) := 
\begin{cases}
0, & x\leq 0,\\
2 - 2\Phi\left(\sqrt{\frac{c}{x}}\right),&  x> 0,
\end{cases}
\end{equation}
where $\Phi(\cdot)$ is the standard normal CDF. In this case, the probability density function (PDF) and quantile function for $X\sim Lv(c)$ are given by
\begin{align}
 f(x;c) & := \1_{[0,\infty)}(x)\cdot \sqrt{\frac{c}{2\pi}}x^{-3/2}\exp\left(-\frac{c}{2x}\right), \quad x\in \bR,\label{eq:Levydens}\\
 Q(p;c) &:= \frac{c}{\left(\Phi^{-1}(1-p/2)\right)^2} =  \frac{c}{\left(\sqrt{2} G(p)\right)^2}, \quad p\in (0,1),\label{eq:Levy_Q}
\end{align}
respectively, where $G(p) := \erf^{-1}(1-p)$, for $p\in (0,1)$, and $\erf$ is the standard error function.
It should be noted that the one-sided L\'{e}vy distribution belongs to the class of \stab  distributions with stability index $\alpha = 1/2$ and skewness parameter $\beta = 1$. Thus, in general, we might also consider an extended class of one-sided L\'{e}vy distributions with additional location parameter $\mu\in\bR$. In this case, which is discussed mainly in Section~\ref{sec5}, we write $X\sim  Lv(\mu,c)$ and adjust all presented formulas to account for additional location shift, see e.g. Chapter 1 in~\cite{Nolan2020}. Also, it is worth noting that $X\sim Lv(c)$ could be seen as the inverse Gamma distribution with the shape parameter equal to $1/2$ and the scale parameter equal to $c/2$; see \cite{CragHack}. Consequently, while the moments of $X\sim Lv(c)$ are infinite, the inverse moments, i.e. moments of $X^{-1}$, are finite. 

Let us now introduce a notation related to the quantile conditional moments of a generic (absolutely continuous) random variable $X$. 
% {\color{red}DJ: Do we use this notation with distributions other than L\'{e}vy? If not, maybe it is better to assume that throughout the paper $X\sim Lv(c)$ and simplify the narration.} 
For a quantile split $0\leq a<b\leq 1$, we define the quantile conditioning set of $X$ by setting
\[
M_X(a,b):=\{X\in[ F_X^{-1}(a),F_X^{-1}(b)]\},
\]
where $F_X^{-1}(\cdot)$ is the (generalised) inverse of the CDF of $X$. For quantile splits $0\leq a<b\leq 1$, the related quantile conditional mean (QCM) and quantile conditional variance (QCV) of $X$ are given by
\begin{align}
\mu_X(a,b) &:= \mathbb{E}[X| M_X(a,b) ],\label{qe:con_mean}\\
\sigma_X^2(a,b) &:=\mathbb{E}[X^2| M_X(a,b)]-(\mathbb{E}[X| M_X(a,b)])^2,\label{eq:cond_var}
 \end{align}
respectively. Note that for any $0<a<b<1$ and generic random variable $X$ both $\mu_X(a,b)$ and $\sigma^2_X(a,b)$ are well-defined and finite since $X$ is bounded on $M_X(a,b)$ and $\bP[M_X(a,b)]>0$. For $X \sim Lv(c)$, one can compute QCM and QCV explicitly. Indeed, assuming that $X \sim Lv(c)$, for any $0\leq a<b<1$, we get
\begin{align}
\mu_{X}(a,b) &= \frac{\sqrt{\pi} c \left( \exp\left(-G^2(b)\right)G(a) - \exp \left(-G^2(a)\right)G(b)\right)}{(b-a) \pi  G(a)G(b)} -c, \label{eq:cond_mean_quant}\\
\sigma^2_{X}(a,b)&=  \frac{-c^2 \left( \exp\left(-G^2(b)\right)\left(G^2(a) - \frac{1}{2}\right) + \exp\left(-G^2(a)\right)\left(G^2(b) - \frac{1}{2}\right) \right) }{3\sqrt(\pi)(b-a)G^3(b)G^3(a)}  - \frac{c^2}{3(b-a)}
-\mu_{X}(a,b)^2.\label{eq:cond_var_quant}
\end{align}

Both QCM and QCV can be estimated using sample estimators similar to the ones used for unconditional moments. Namely, for $n\in\bN$,  a sample of independent and identically distributed (i.i.d) random variables $(X_i)_{i=1}^{n}$, and a quantile split $0 \leq a < b \leq 1$, the {\it sample quantile conditional mean}  and {\it sample quantile conditional variance} are given by
\begin{align}
     \hat{\mu}_X(a,b) &:= \textstyle \frac{1}{[nb] - [na]} \sum_{i=[na]+1}^{[nb]}X_{(i)},\label{eq:estimQCM}\\
 \hat{\sigma}^2_X(a,b) &:= \textstyle \frac{1}{[nb] - [na]} \sum_{i=[na]+1}^{[nb]} (X_{(i)} - \hat{\mu}_X(a,b))^2,\label{eq:estimQCV}
\end{align}
respectively, where $X_{(i)}$ corresponds to the $i$th order statistic of the sample, and  $[x] := \max\{k \in \mathbb{Z}: k \leq x \}$ is the integer part of $x\in\bR$. Estimators \eqref{eq:estimQCM} and \eqref{eq:estimQCV} could be seen as a classical sample mean and sample variance applied to the conditioned sub-sample in which the conditioning is based on order statistics. From \cite{JelPit2018} and \cite{PitCheWyl2021} we know that estimators \eqref{eq:estimQCM} and \eqref{eq:estimQCV} are consistent and follow CLT-type law, i.e. they have asymptotically normal distributions for i.i.d samples. 

% \textcolor{red}{czy my to wiemy dla stabilnych, czy wszystkich r.v? Może tutaj warto dodać.}

% Given $n\in\bN$ and an i.i.d. sample $(X_i)_{i=1}^{n}$ from $X\sim Lv(c)$, we can use sample QCV estimator \eqref{eq:estimQCV} to estimate the scale parameter $c>0$. Namely, given sample $(X_i)_{i=1}^{n}$, noting that QCV is always finite and its square-root is proportional to scale, one can estimate $c>0$ by simply setting
% \[
%     \hat{c}_{\textrm{QCV}}:=\sqrt{\frac{\hat{\sigma}_{X}^2(a,b)}{\sigma_{Lv(1)}^2(a,b)}},
% \]

Now, we show how to use~\eqref{eq:estimQCM} and \eqref{eq:estimQCV} to estimate the scale parameter for one-sided L\'{e}vy distribution. Let us assume we are given an arbitrarily chosen quantile split $0\leq a<b<1$ and an i.i.d. sample $(X_i)_{i=1}^{n}$ from $Lv(c)$ with unknown scale parameter $c>0$. Noting that QCM and QCV are finite and proportional to $c$, we can estimate the scale parameter simply by setting
\begin{align}
    \hat{c}_{\textrm{QCM}}(a,b)&:=\frac{\hat{\mu}_{X}(a,b)}{\mu_{Lv(1)}(a,b)},\\
    \hat{c}_{\textrm{QCV}}(a,b)&:=\sqrt{\frac{\hat{\sigma}^2_{X}(a,b)}{\sigma^2_{Lv(1)}(a,b)}},
\end{align}
where $\mu_{Lv(1)}(a,b)$ and $\sigma^2_{Lv(1)}(a,b)$ denote the normalising constants equal to $\mu_W(a,b)$ and $\sigma^2_W(a,b)$ for $W\sim Lv(1)$; note that these constants can be computed using~\eqref{eq:cond_mean_quant} and~\eqref{eq:cond_var_quant} with $c=1$.

In the literature, there are other methods designed to estimate scale parameter $c>0$. Let us only mention the classical maximum likelihood estimator and a more recent idea related to the covariance of suitable chosen random variables proposed in~\cite{KumBha2022}. More specifically, given an i.i.d. sample $(X_i)_{i=1}^{n}$ from $X\sim Lv(c)$, we define the transformed samples  $(Y_i)_{i=1}^{n}$ and $(Z_i)_{i=1}^{n}$, where $Y_i:=1/X_i$ and $Z_i:=\ln Y_i$, for $i=1,\ldots,n$. Then, following~\cite{KumBha2022}, we define the maximum likelihood (MLE) scale estimator and  covariance-based (COV) scale estimator by setting
\begin{align} 
    \hat{c}_{\textrm{MLE}}&:=\frac{1}{\overline{Y}_n},\\
    \hat{c}_{\textrm{COV}}&:=\frac{2n}{\sum_{i=1}^n (Y_i-\overline{Y}_n)(Z_i-\overline{Z}_n)},
\end{align} 
where  $\bar{Y}_n$ and $\bar{Z}_n$ are empirical sample means of $(Y_i)_{i=1}^{n}$ and $(Z_i)_{i=1}^{n}$, respectively. It should be noted that the estimators  $\hat{c}_{\textrm{MLE}}$ and $\hat{c}_{\textrm{COV}}$ are consistent and one can derive their asymptotic distributions, see~\cite{KumBha2022} for details. 

%In the next section, we will also make use of the {\it ensemble} combined (COM) scale estimator 
%\[
%    \hat{c}_{\textrm{COM}} := \frac{\hat{c}_{\textrm{COV}} + \hat{c}_{\textrm{QCM}}}{2},
%\]
%which is simply the average of covariance-based scale estimator and QCM estimator.

The high-level performance comparison of the introduced estimators is presented in Figure~\ref{fig:boxplot}. It should be noted that the MLE estimator outperforms all other estimators, and the second best estimator is the QCM estimator. In general, the COV estimator presents some small bias, while the QCV estimator has the highest standard error. It should be noted that the worse performance of the QCV estimator could be traced back to the fact that it is designed to be location-invariant while the other estimators are not.

%and the performance of $\hat{c}_{\textrm{COM}}$ is generally better that the performance of $\hat{c}_{\textrm{COV}}$ and  $\hat{c}_{\textrm{QCM}}$ considered separately. 

\begin{figure}[htp!]
\begin{center}
\includegraphics[width=0.22\textwidth]{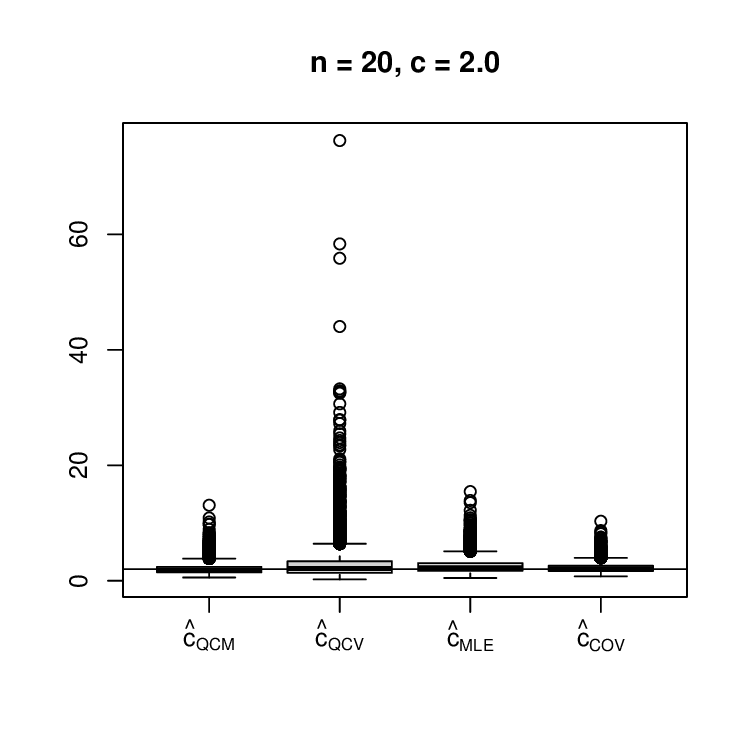}
\includegraphics[width=0.22\textwidth]{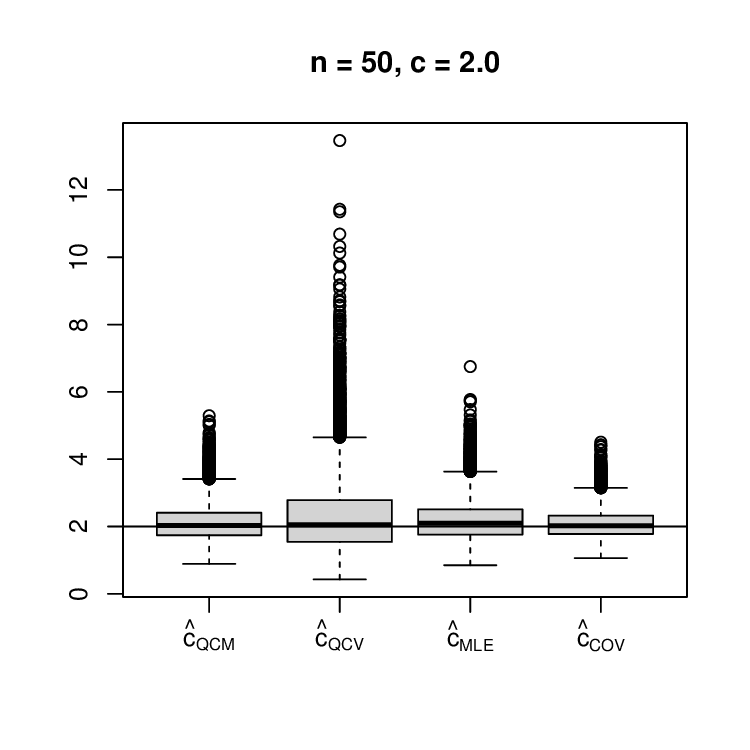}
\includegraphics[width=0.22\textwidth]{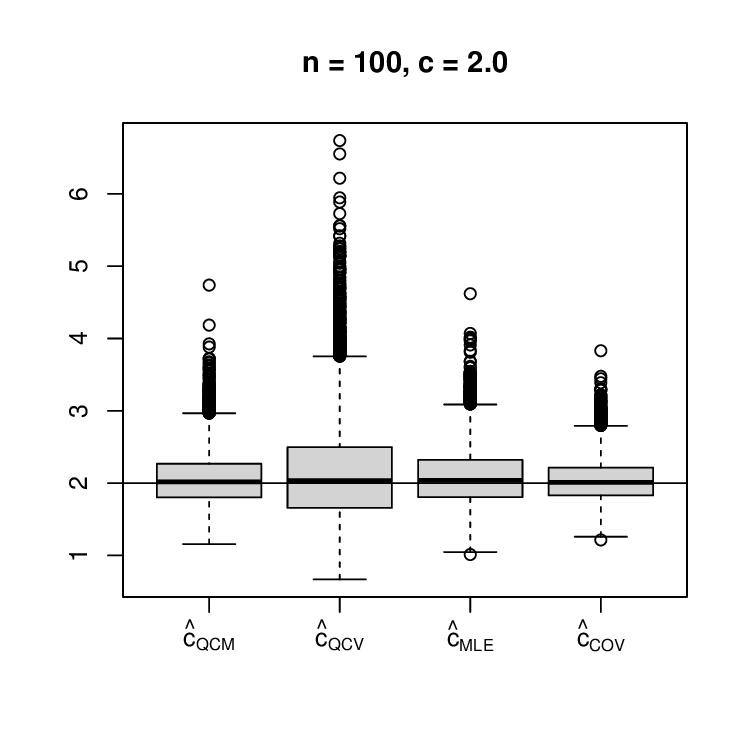}
\includegraphics[width=0.22\textwidth]{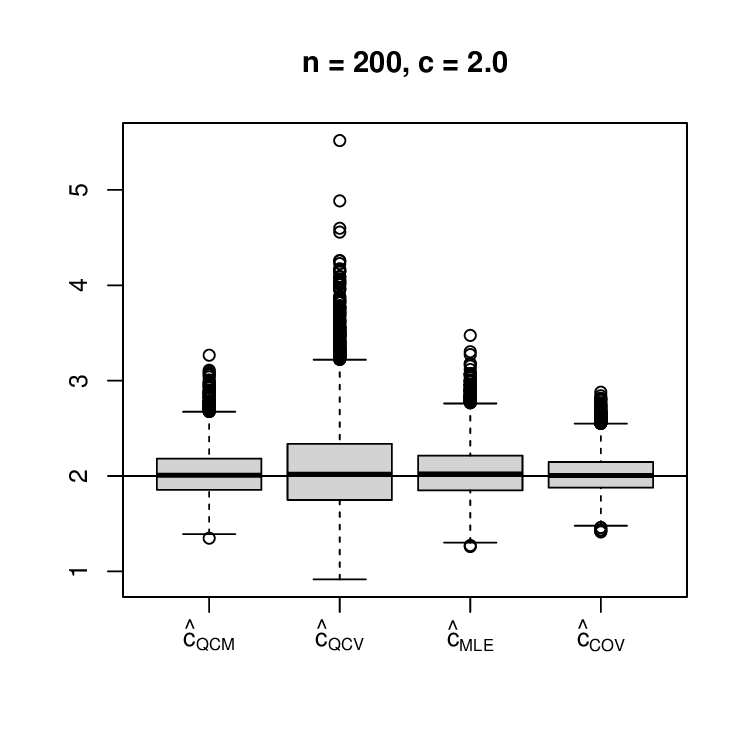}
\end{center}
\caption{ Boxplots of the estimated values for the quantile conditional mean estimator ($\hat{c}_{\textrm{QCM}}$) with $a=0.2$ and $b=0.48$, the quantile conditional variance ($\hat{c}_{\textrm{QCV}}$) with $a = 0.0$ and $b = 0.7$, the maximum likelihood estimator ($\hat{c}_{\textrm{MLE}}$), and the method of covariance estimator ($\hat{c}_{\textrm{COV}}$). The results are based on 10000 strong Monte Carlo samples with sizes $n \in \{20,50,100,200\}$. The horizontal black line represents the true value of the scale parameter $c = 2$. }
\label{fig:boxplot}
\end{figure}

\section{One-sided L\'{e}vy distribution goodness-of-fit test statistics}\label{sec3}

In this section we are interested in statistical goodness-of-fit tests for the null hypothesis stating that the sample follows the one-sided L\'{e}vy distribution (with any $c>0$) and an alternative stating that the sample is from some other family of distributions.  As in the previous section, we follow the standard i.i.d. statistical framework, use $X$ to denote the reference random variable and $(X_i)_{i=1}^{n}$, $n\in\bN$, to denote sample from $X$. Also, as in the previous section, we use $(Y_i)_{i=1}^n$ and $(Z_i)_{i=1}^n$ to denote the transformed samples, where $Y_i=1/X_i$ and $Z_i=\ln Y_i$, for $i=1,\ldots,n$.

The aim of this section is to show how to refine the existing testing frameworks using test statistics based on quantile conditional moments. For the key benchmark, we take the test statistic introduced in~\cite{KumBha2022} which is based on the ratio between two scale estimators, i.e. $\hat{c}_{\textrm{COV}}$ and $\hat{c}_{\textrm{MLE}}$. Following this research, we introduce the test statistic
\begin{equation}\label{eq:Vn}
   V_n:= \sqrt{n}\left(\frac{ \hat c_{\textrm{COV}}}{ \hat c_{\textrm{MLE}}} -1\right).
\end{equation}
It should be noted that $V_n$ is a pivotal quantity with respect to $c>0$ and the value of $V_n$ should be close to $0$ if the sample follows the one-sided L\'{e}vy distribution. For completeness, we restate the result from~\cite{KumBha2022}, where the asymptotic distribution of $V_n$ is derived; see Theorem 3.1 and Corollary 3.1 therein for details.
%Proposition~\ref{pr:benchmark}. 
%Also, note that it has been shown that $V_n$ statistics outperforms other estimators such as the jackknife empirical likelihood (JEL) ratio and adjusted Jackknife empirical likelihood (AJEL) defined in \cite{BhaKat2020}. 

\begin{proposition}\label{pr:benchmark}
Let $X_i$, $i = 1,2, \ldots$, be i.i.d. random variables following $Lv(c)$ with some $c>0$. Also, let $V_n$ be given by~\eqref{eq:Vn}. Then, we get $V_n\overset{d}{\to }\mathcal{N}(0,\tau^2_V)$, as $n \to \infty$, where $\tau_V > 0$ is a constant independent of $c$. 
\end{proposition}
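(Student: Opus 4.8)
The plan is to rewrite $V_n$ as a smooth function of ordinary sample averages and then combine the multivariate central limit theorem with the delta method. First I would exploit the distributional fact already recorded in Section~\ref{S:preliminaries}: since $X\sim Lv(c)$ is inverse Gamma with shape $1/2$ and scale $c/2$, the transformed variable $Y=1/X$ follows the Gamma law with shape $1/2$ and rate $c/2$, with density $f_Y(y)=\sqrt{c/(2\pi)}\,y^{-1/2}e^{-cy/2}$ on $(0,\infty)$. In particular all positive moments of $Y$ are finite, and because the logarithmic singularity of $\ln y$ at the origin is integrable against $y^{-1/2}$, all moments of $Z=\ln Y$ are finite as well; these integrability facts are exactly what the CLT below requires.

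Next I would introduce the three sample averages $A_n:=\overline{Y}_n$, $B_n:=\overline{Z}_n$, and $C_n:=\frac1n\sum_{i=1}^n Y_iZ_i$. Since the sum in the denominator of $\hat c_{\textrm{COV}}$ equals $n(C_n-A_nB_n)$, the statistic factors through these averages:
\[
\frac{\hat c_{\textrm{COV}}}{\hat c_{\textrm{MLE}}}=\frac{2\overline{Y}_n}{C_n-A_nB_n}=g(A_n,B_n,C_n),\qquad g(a,b,c'):=\frac{2a}{c'-ab}.
\]
Writing $\mu=(\mu_A,\mu_B,\mu_C)=(\bE Y,\bE Z,\bE[YZ])$, standard Gamma identities give $\bE Y=1/c$ and $\cov(Y,Z)=\mu_C-\mu_A\mu_B=2/c$ (the latter from $\psi(k+1)-\psi(k)=1/k$ applied at $k=1/2$), whence $g(\mu)=1$; this already delivers consistency. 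Applying the multivariate CLT to the i.i.d. vectors $(Y_i,Z_i,Y_iZ_i)$ — legitimate because $\bE[Y^2]$, $\bE[Z^2]$, $\bE[(YZ)^2]$ are all finite — yields $\sqrt{n}\bigl((A_n,B_n,C_n)-\mu\bigr)\overset{d}{\to}\mathcal N(0,\Sigma)$, and the delta method then gives $V_n=\sqrt{n}\bigl(g(A_n,B_n,C_n)-g(\mu)\bigr)\overset{d}{\to}\mathcal N(0,\tau_V^2)$ with $\tau_V^2=\nabla g(\mu)^\top\Sigma\,\nabla g(\mu)$.

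Two points then remain, and they are where the real work lies. For strict positivity, note that $\tau_V^2=\Var\bigl(\nabla g(\mu)\cdot(Y,Z,YZ)\bigr)$ vanishes only if some affine combination $\alpha Y+\beta Z+\gamma YZ$ is almost surely constant; since $\partial_{c'}g(\mu)=-c/2\neq0$, the coefficient of $YZ$ is nonzero, and as the functions $1,y,\ln y,y\ln y$ are linearly independent on $(0,\infty)$ no such degeneracy can occur, so $\tau_V>0$. For independence of $c$, I would argue by scaling: writing $Y=(2/c)W$ with $W$ of unit rate Gamma$(1/2)$ type gives $Z=\ln(2/c)+\ln W$, the additive constant cancels in every centred quantity, and the factor $2/c$ cancels between numerator and denominator of $\hat c_{\textrm{COV}}/\hat c_{\textrm{MLE}}$, so the full law of $V_n$ — hence $\tau_V^2$ — coincides with the $c=1$ case; this is precisely the pivotal property announced after~\eqref{eq:Vn}. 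The main obstacle I anticipate is not any single step but the bookkeeping: cleanly justifying the moment conditions (integrability of $(\ln Y)^2$ and $(Y\ln Y)^2$ near $y=0$) and, if a closed form for $\tau_V^2$ is wanted, assembling the explicit entries of $\Sigma$ and $\nabla g(\mu)$; the delta-method skeleton itself is routine once these ingredients are in place.
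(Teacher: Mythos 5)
Your proposal is correct, and its skeleton is the same as the paper's: rewrite the scale ratio as a smooth function of empirical averages of the transformed observations, apply the multivariate CLT, and finish with the delta method. The differences are organizational but worth noting. The paper proves Propositions 1--3 jointly via a five-dimensional vector $I_n$ that also carries the quantile conditional moment estimators (needed only for $O_n$ and $T_n$); for the covariance coordinate it centers first, writing $E_i=(Y_i-\mu_Y)(Z_i-\mu_Z)-\sigma_{YZ}$ and killing the cross term $v_n=-\sqrt{n}(\bar{Y}_n-\mu_Y)(\bar{Z}_n-\mu_Z)$ by Slutsky, before applying the delta method with $g_V(x,y,z,w,v)=2w/v$. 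You instead use the uncentered coordinates $(Y_i,Z_i,Y_iZ_i)$, which absorbs that cross term into the function $g(a,b,c')=2a/(c'-ab)$ and removes the need for the Slutsky step. More substantively, you supply three ingredients that the paper's proof leaves implicit or merely asserts: (i) the identity $\cov(Y,Z)=2\,\bE[Y]=2/c$ (via $\psi(k+1)-\psi(k)=1/k$ at $k=1/2$), which is exactly what makes $g(\mu)=1$ and hence makes the delta-method centering agree with the ``$-1$'' in~\eqref{eq:Vn} --- the paper writes $V_n=\sqrt{n}\left(g_V(H_n)-g_V(H)\right)$ without verifying $g_V(H)=1$; (ii) strict positivity $\tau_V>0$, which you obtain from $\partial_{c'}g(\mu)=-c/2\neq 0$ together with linear independence of $1$, $y$, $\ln y$, $y\ln y$ on $(0,\infty)$, whereas the paper's proof never addresses nondegeneracy of the limit; and (iii) independence of $\tau_V$ from $c$, which you get from the scaling $Y=(2/c)W$, again stated but not proved in the paper. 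So your route is essentially the paper's, restricted to $V_n$, but leaner in the probabilistic bookkeeping and more complete on the pivotality and positivity claims.
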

The proof of Proposition~\ref{pr:benchmark} could be found in~\cite{KumBha2022}. Proposition~\ref{pr:benchmark} is also a direct implication of Proposition~\ref{pr:appendix}, which proof we defer to the appendix. The graphical illustration of Proposition~\ref{pr:benchmark} is presented in Figure~\ref{fig:Vhat}.

\begin{figure}[htp!]
\begin{center}
\includegraphics[width=0.26\textwidth]{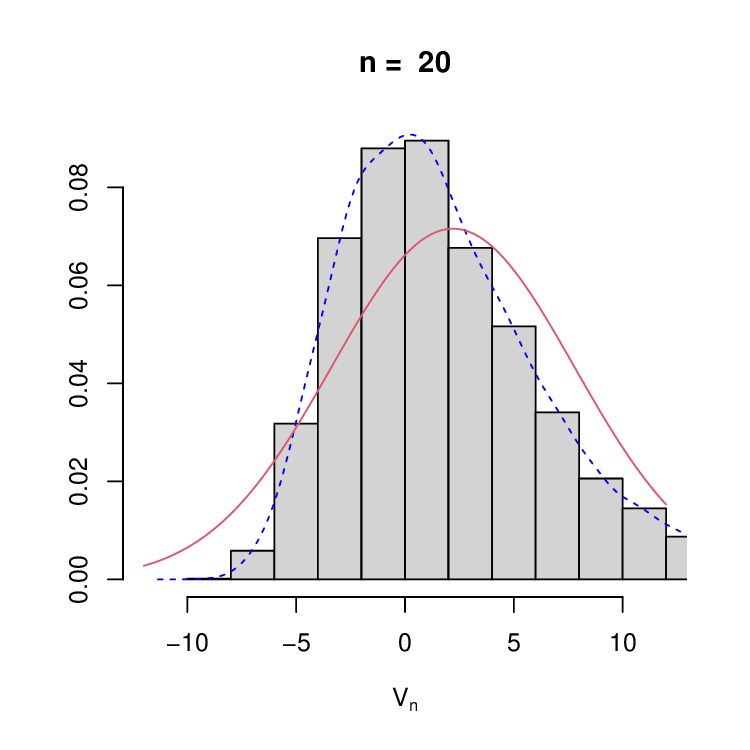}
\includegraphics[width=0.26\textwidth]{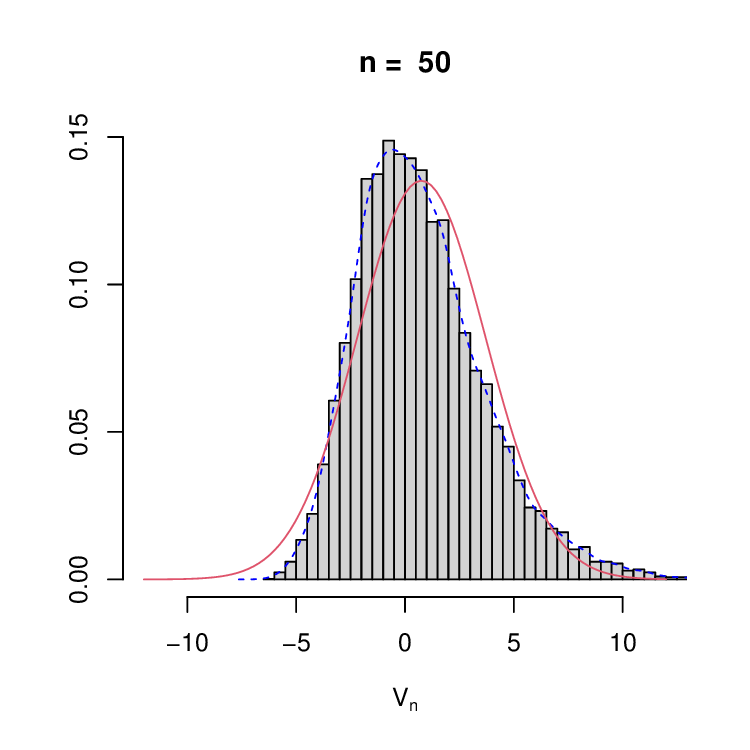}
\includegraphics[width=0.26\textwidth]{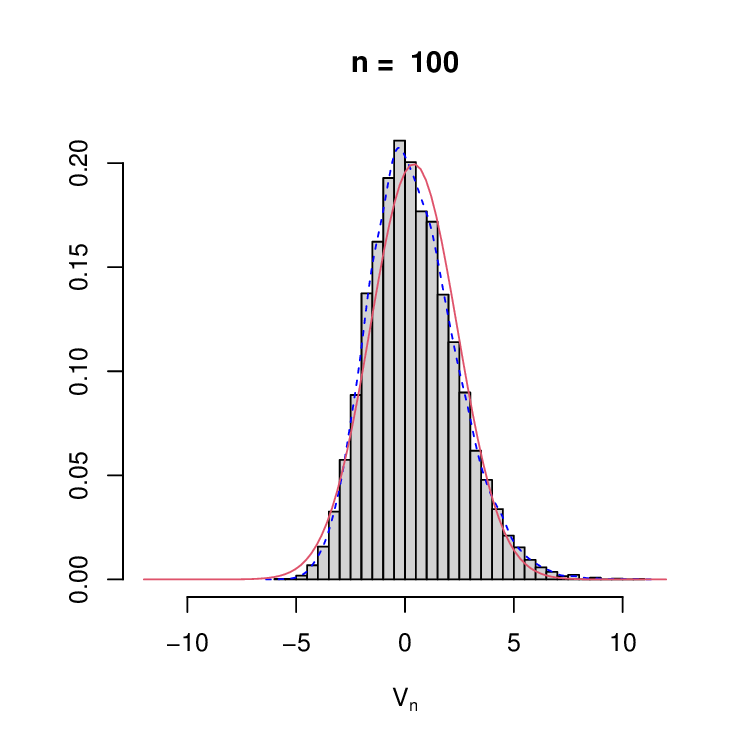}
\includegraphics[width=0.26\textwidth]{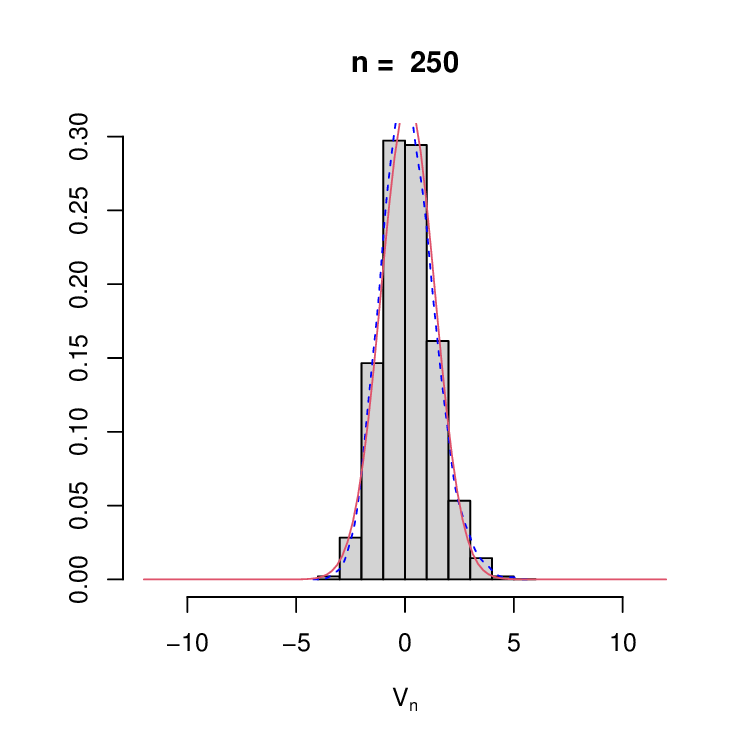}
\includegraphics[width=0.26\textwidth]{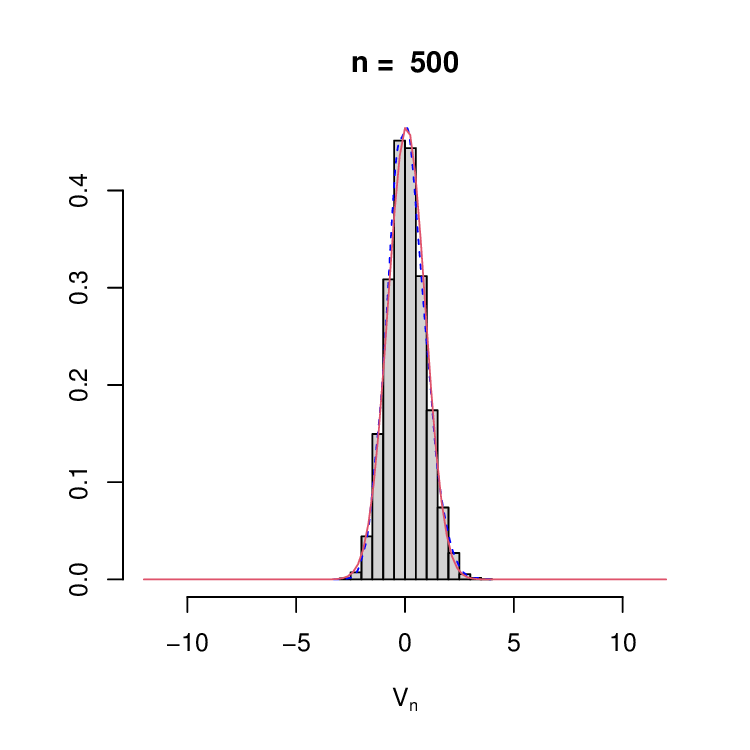}
\includegraphics[width=0.26\textwidth]{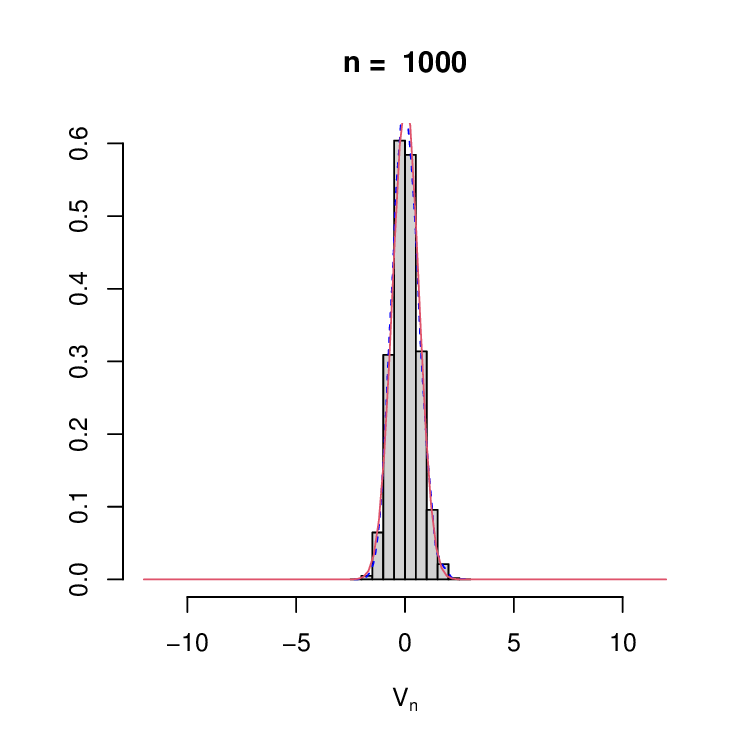}
\end{center}
\caption{Histogram of $V_n$ distribution under the null hypothesis with $n \in \{20,50,100,250,500,1000\}$. The blue dashed line represents the kernel density while the solid red line shows the fitted normal density. Each histogram is based on 10 000 strong Monte Carlo simulations.}
\label{fig:Vhat}
\end{figure} 
%\FloatBarrier
In the following subsections, we propose two alternative new test statistics that are based on quantile conditional moments.

\subsection{Scale ratio statistic based on QCM}
Following the logic used in \eqref{eq:Vn}, we can define a test statistic which uses scale estimators based on QCM characteristics for two different quantile splits. Namely, we define a new test statistic by setting
\begin{equation}\label{eq:On}
    O_n := \sqrt{n} \left(\frac{\hat{c}_{\textrm{QCM}}(a_1,b_1)}{\hat{c}_{\textrm{QCM}}(a_2,b_2)} -1\right),
\end{equation}
where $0\leq a_1<b_1<1$ and $0\leq a_2<b_2< 1$ are two fixed quantile splits. Note that  $O_n$ is a pivotal quantity with respect to scaling. For testing purposes we consider fixed values $a_1=0$, $b_1=0.3$, $a_2=0.8$, and $b_2=0.95$.
As in Proposition~\ref{pr:benchmark}, we may show that $O_n$ is asymptotically normal under the null hypothesis.

\begin{proposition}\label{pr:On.stat}
Let $X_i$, $i = 1,2, \ldots$, be i.i.d. random variables following $Lv(c)$ with some $c>0$. Also, let $O_n$ be given by~\eqref{eq:On}. Then, we get
$O_n\overset{d}{\to} \mathcal{N}(0, \tau_O^2)$, as $n\to\infty$, where $\tau_O>0$ is a constant independent of $c$.
\end{proposition}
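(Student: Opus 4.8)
The plan is to reduce the statement to a joint central limit theorem (CLT) for the pair of sample quantile conditional means appearing in~\eqref{eq:On}, and then to pass to $O_n$ via the multivariate delta method applied to the ratio functional.

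First I would remove the dependence on $c$ by a scaling argument. Since $Lv(c)$ is the \stab law with $\alpha=1/2$ and $\beta=1$, the sample $(X_i)_{i=1}^n$ drawn from $Lv(c)$ satisfies $(X_1,\dots,X_n)\overset{d}{=}(cW_1,\dots,cW_n)$, where $(W_i)_{i=1}^n$ are i.i.d.\ from $Lv(1)$. Order statistics scale accordingly, $X_{(i)}\overset{d}{=}cW_{(i)}$, so $\hat{\mu}_X(a,b)=c\,\hat{\mu}_W(a,b)$ and therefore $\hat{c}_{\textrm{QCM}}(a,b)=c\,\hat{\mu}_W(a,b)/\mu_{Lv(1)}(a,b)$. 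The factor $c$ cancels in the ratio defining $O_n$, so $O_n$ has the same law under $Lv(c)$ as under $Lv(1)$. This already yields that the limiting variance $\tau_O^2$, once it exists, does not depend on $c$; from now on I assume $c=1$.

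With $c=1$, set $m_i:=\mu_{Lv(1)}(a_i,b_i)>0$ and write the ratio in~\eqref{eq:On} as $g\big(\hat{\mu}_W(a_1,b_1),\hat{\mu}_W(a_2,b_2)\big)$, where $g(u,v):=(m_2/m_1)(u/v)$ so that $g(m_1,m_2)=1$. Consistency of the sample quantile conditional mean (\cite{JelPit2018,PitCheWyl2021}) gives $\hat{\mu}_W(a_i,b_i)\to m_i$, fixing the centering at $1$. The crux of the argument is to upgrade the marginal CLT-type laws of~\cite{JelPit2018,PitCheWyl2021} to the \emph{joint} convergence
\begin{equation*}
\sqrt{n}\left(\begin{array}{c}\hat{\mu}_W(a_1,b_1)-m_1\\[2pt]\hat{\mu}_W(a_2,b_2)-m_2\end{array}\right)\overset{d}{\to}\mathcal{N}(0,\Sigma)
\end{equation*}
for an appropriate covariance matrix $\Sigma$. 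I would obtain this through a single Bahadur-type linearisation: each trimmed sum $\hat{\mu}_W(a_i,b_i)$ is an L-statistic that, up to a negligible remainder, equals an average of i.i.d.\ score functions of the same underlying sample, so that applying the ordinary multivariate CLT to the joint vector of scores produces the display above and identifies $\Sigma$. Although the splits $(0,0.3)$ and $(0.8,0.95)$ are disjoint and hence use non-overlapping order statistics, the two trimmed sums remain dependent through the shared sample quantiles, so the off-diagonal term $\Sigma_{12}$ is in general nonzero.

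Granting the joint CLT, the multivariate delta method finishes the proof: a direct computation gives $\nabla g(m_1,m_2)=(1/m_1,-1/m_2)$, whence $O_n\overset{d}{\to}\mathcal{N}(0,\tau_O^2)$ with
\begin{equation*}
\tau_O^2=\nabla g(m_1,m_2)^{\top}\Sigma\,\nabla g(m_1,m_2)=\frac{\Sigma_{11}}{m_1^2}-\frac{2\Sigma_{12}}{m_1 m_2}+\frac{\Sigma_{22}}{m_2^2},
\end{equation*}
and $\tau_O>0$ because $\nabla g(m_1,m_2)\neq 0$ and $\Sigma$ is positive definite. The main obstacle is precisely the joint CLT: the cited references supply only the marginal asymptotic normality of each $\hat{\mu}_W(a_i,b_i)$, so the real work is to verify that a common empirical-process (or L-statistic) linearisation governs both coordinates simultaneously and to compute the cross-covariance $\Sigma_{12}$; the scaling reduction and the delta-method step are then routine.
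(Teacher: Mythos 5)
Your proposal is correct and follows essentially the same route as the paper's appendix proof: a Bahadur-type linearization of each sample quantile conditional mean into an average of i.i.d.\ score functions of the same sample (the paper gets this from Lemma~2 of~\cite{JelPit2018}), followed by the multivariate CLT with Slutsky's theorem to obtain the joint Gaussian limit, and then the delta method applied to the ratio map. The only differences are organizational: the paper runs the argument on a single five-dimensional vector so as to cover $V_n$, $O_n$, and $T_n$ simultaneously, and it disposes of the dependence on $c$ by noting pivotality of $O_n$ rather than by your explicit scaling reduction.
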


Proposition~\ref{pr:On.stat} is a direct implication of Proposition~\ref{pr:appendix}, which proof is deferred to the Appendix. The graphical illustration of Proposition~\ref{pr:On.stat} could be found in Figure~\ref{fig:Ohat}.

  \begin{figure}[htp!]
\begin{center}
\includegraphics[width=0.26\textwidth]{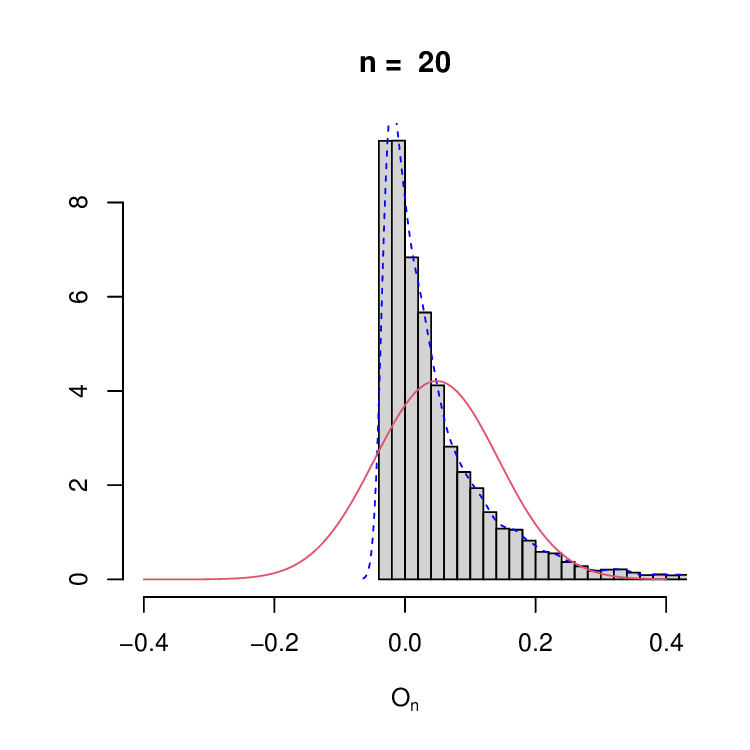}
\includegraphics[width=0.26\textwidth]{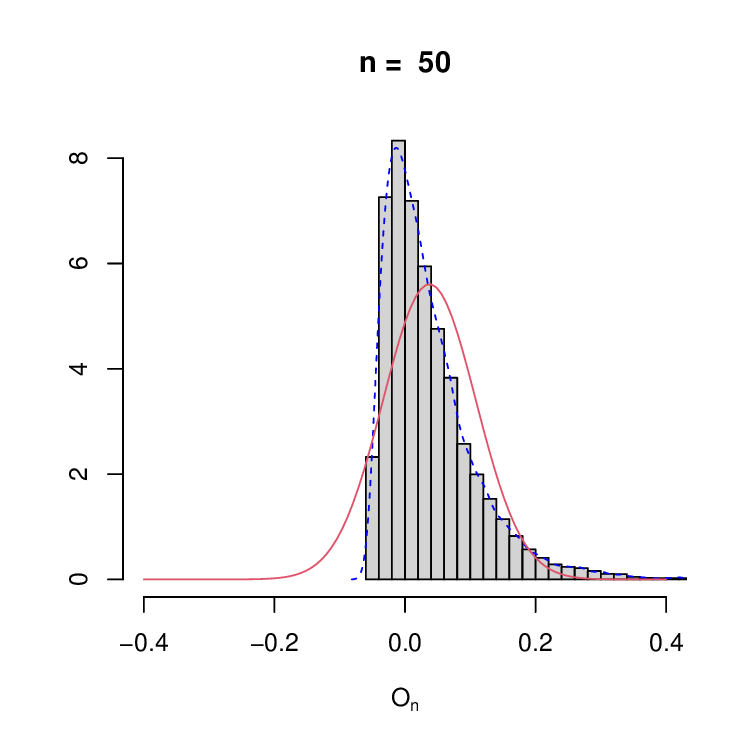}
\includegraphics[width=0.26\textwidth]{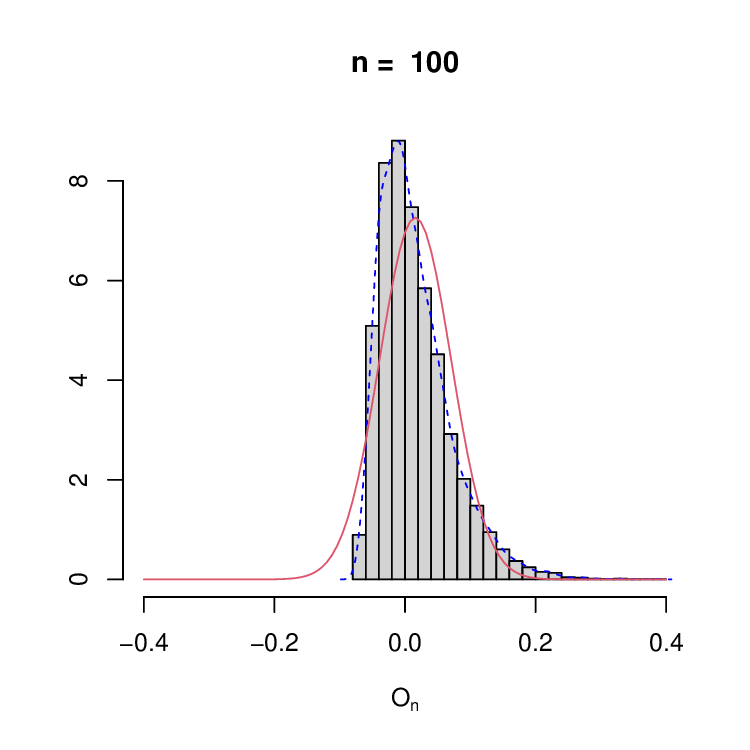}
\includegraphics[width=0.26\textwidth]{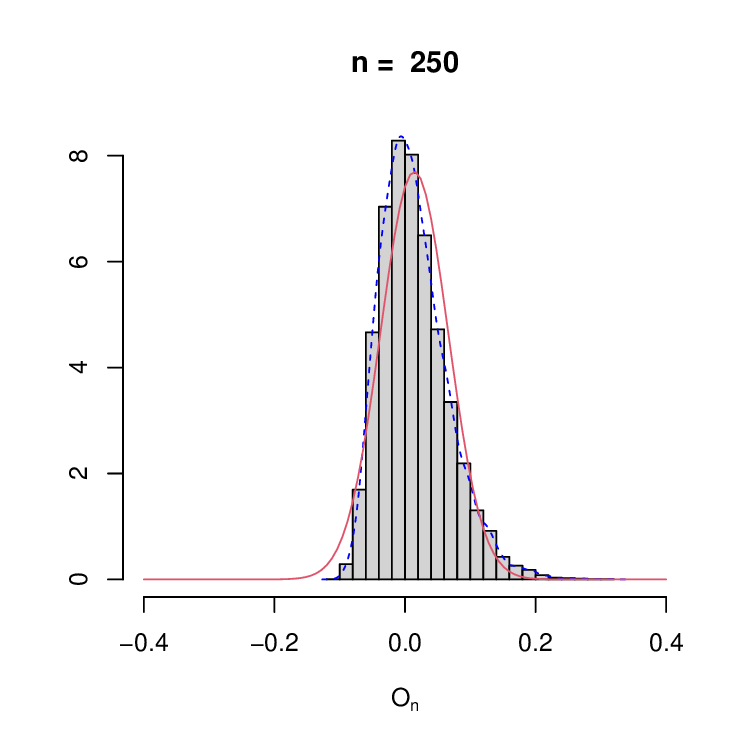}
\includegraphics[width=0.26\textwidth]{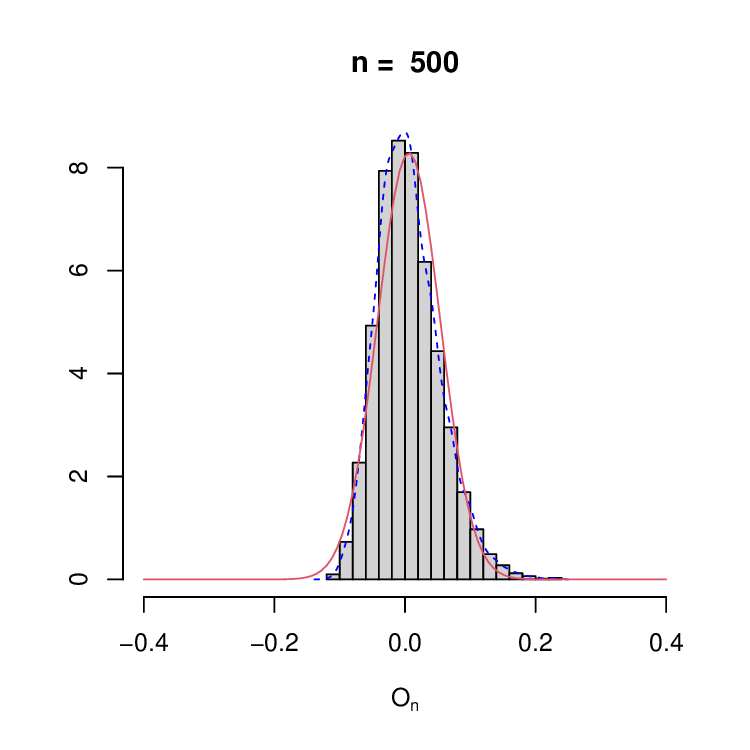}
\includegraphics[width=0.26\textwidth]{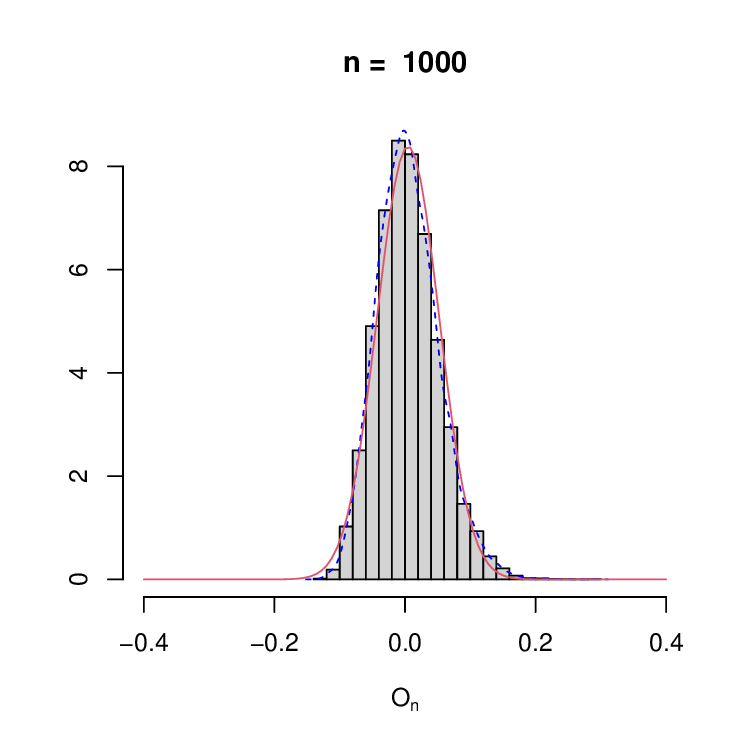}
\end{center}
\caption{Histogram of $O_n$ distribution under the null hypothesis with $n \in \{20,50,100,250,500,1000\}$. The blue dashed line represents the kernel density while the solid red line shows the fitted normal density. Each histogram is based on 10 000 strong Monte Carlo simulations.}
\label{fig:Ohat}
\end{figure}

\FloatBarrier

\subsection{Ensemble scale ratio test statistic}
The next test statistic is also inspired by~\eqref{eq:Vn} and uses $\hat c_{\textrm{COV}}$ estimator. More specifically, we consider an ensemble statistic that modifies $V_n$ introduced in~\eqref{eq:Vn} and is equal to \begin{align}\label{eq:Tn}
T_n:=\sqrt{n}\left(\frac{ \hat c_{\textrm{COV}}+\hat c_{\textrm{QCM}}(a,b)}{2\cdot \hat c_{\textrm{MLE}}} -1\right),
\end{align}
where $0\leq a<b<1$ is a fixed quantile split. For testing purposes, we consider fixed values $a=0.02$ and $b=0.48$ that
were chosen empirically to minimize the standard deviation of the absolute estimation error of $\hat{c}_{\textrm{QCM}}(a,b)$.

% {\color{red} DJ: maybe we should move this paragraph (with the table and the figure) to the end of Section~\ref{S:preliminaries}?} 
Before we proceed, let us explain the rationale behind the construction of $T_n$. First, in Figure~\ref{fig:Ccorr}, we confront the estimated values of $c$ for $Lv(c)$ distribution, for some estimators discussed in Section~\ref{S:preliminaries}. From this plot, we can see that the correlation between the ensemble estimator $\tfrac{1}{2}(\hat c_{\textrm{COV}}+\hat c_{\textrm{QCM}}(a,b))$ and $\hat{c}_{MLE}$ is higher than the correlation between estimators $\hat{c}_{COV}$ and $\hat{c}_{MLE}$ which should have a positive impact on the power of the test, e.g. due to the fact that the ratio between those statistics might be more focused around 1. Also, note that $\hat{c}_{COV}$ and $\hat{c}_{QCM}$ are not highly correlated so the combined sum could possess more statistical information within. This is numerically summarised in Table~\ref{tab:correlations}, where we present the covariances and the correlations for the considered estimators.

\begin{figure}[htp!]
\begin{center}
\includegraphics[width=0.22\textwidth]{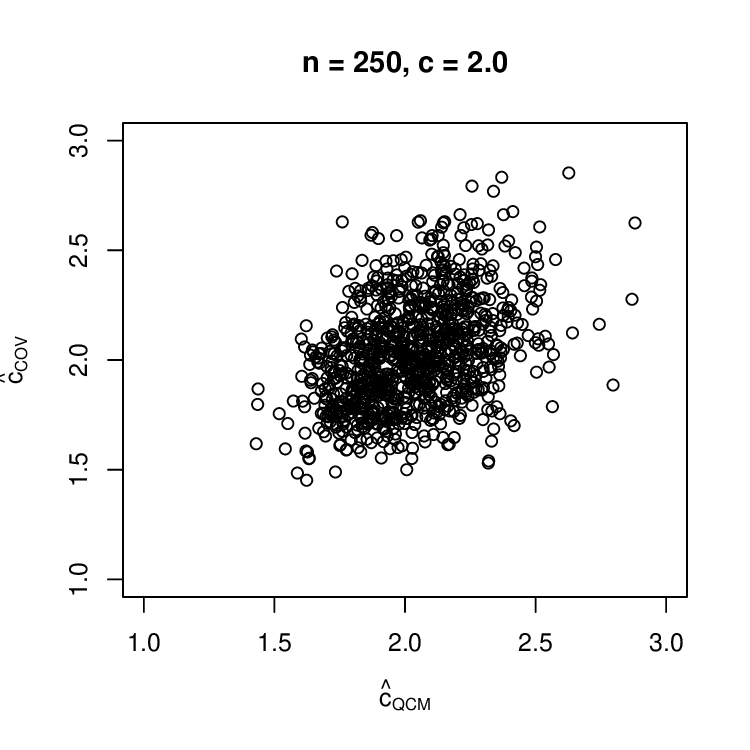}
\includegraphics[width=0.22\textwidth]{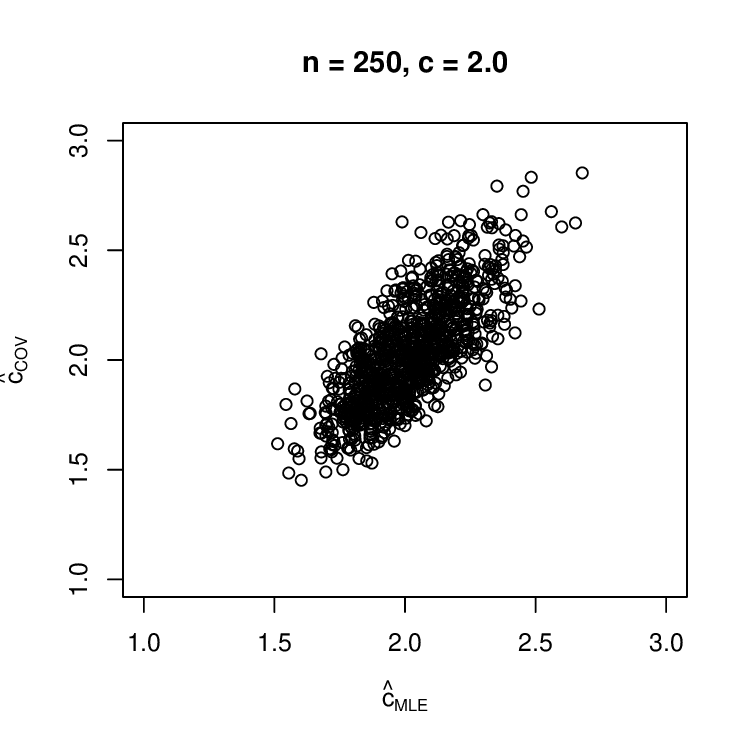}
\includegraphics[width=0.22\textwidth]{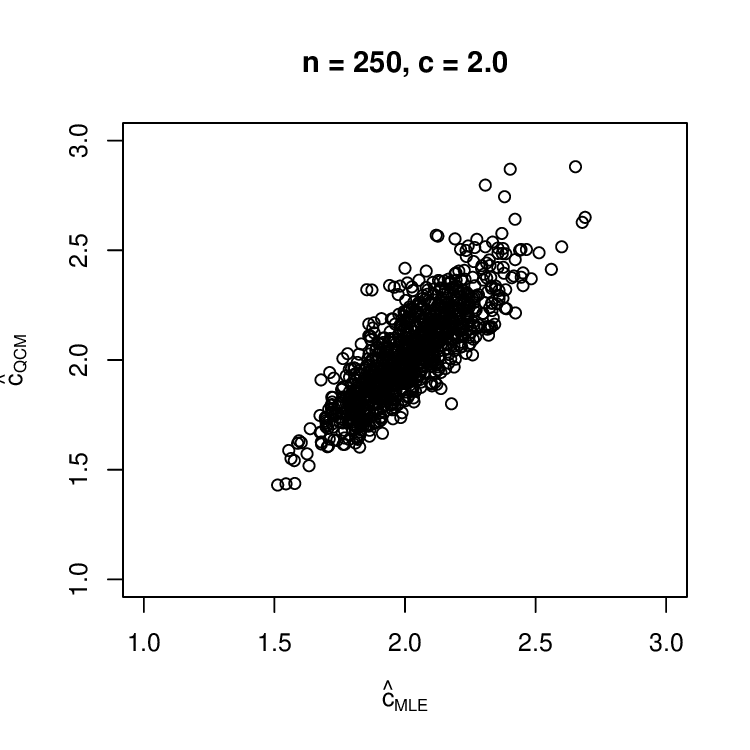}
\includegraphics[width=0.22\textwidth]{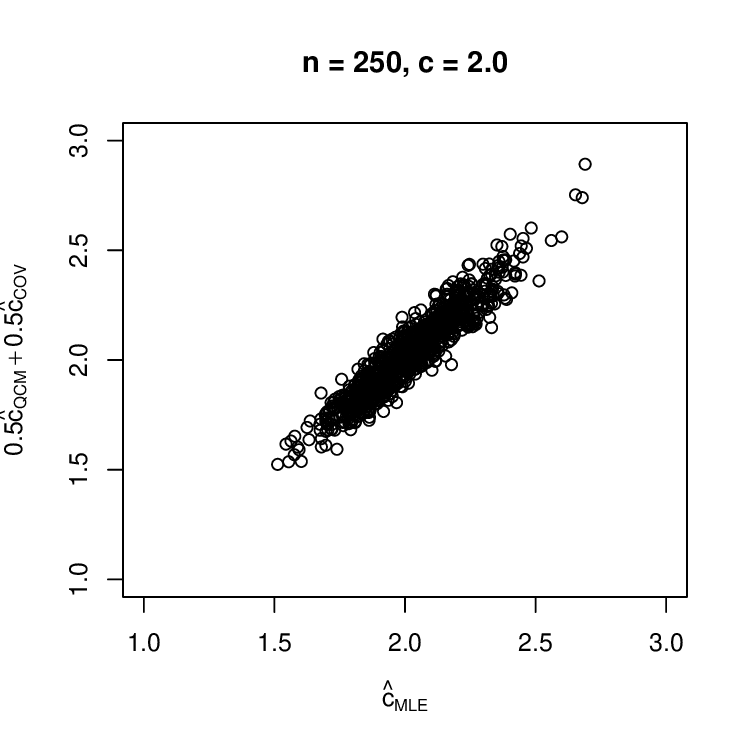}
\end{center}
\caption{Scatterplots of the estimates based on $\hat{c}_{\textrm{QCM}}(0.02,0.48)$, $\hat{c}_{\textrm{MLE}}$, $\hat{c}_{\textrm{COV}}$ and $\tfrac{1}{2}(\hat c_{\textrm{COV}}+\hat c_{\textrm{QCM}}(0.02,0.48))$. The results are based on 1000 strong Monte Carlo samples with $n=250$ and $c=2$. These plots might be interpreted as a graphical representation of the correlation between estimators. While the estimators $\hat{c}_{\textrm{COV}}$ and $\hat{c}_{\textrm{QCM}}$ are rather uncorrelated, it might explain why their linear combination outperforms the results obtained by each estimator separately. What is more, the estimators $\hat{c}_{\textrm{COM}}$ and $\hat{c}_{\textrm{MLE}}$ have a high correlation which might affect the usage of the ratio of those estimators as a test statistic. }
\label{fig:Ccorr}
\end{figure}

\begin{table}[ht]
\centering
\begin{tabular}{|r|rrrr|}
  \hline
  Covariances & $\hat{c}_{QCM}$ & $\hat{c}_{QCV}$ & $\hat{c}_{MLE}$ & $\hat{c}_{COV}$  \\ 
  \hline
$\hat{c}_{QCM}$ & 0.05 & 0.05 & 0.03 & 0.02 \\ 
  $\hat{c}_{QCV}$ & 0.05 & 0.09 & 0.03 & 0.02 \\ 
  $\hat{c}_{MLE}$ & 0.03 & 0.03 & 0.03 & 0.03 \\ 
  $\hat{c}_{COV}$ & 0.02 & 0.02 & 0.03 & 0.06 \\ 
   \hline
\end{tabular}
\begin{tabular}{|r|rrrr|}
  \hline
 Correlations & $\hat{c}_{QCM}$ & $\hat{c}_{QCV}$ & $\hat{c}_{MLE}$ & $\hat{c}_{COV}$  \\ 
  \hline
$\hat{c}_{QCM}$ & 1.00 & 0.85 & 0.83 & 0.40\\ 
  $\hat{c}_{QCV}$ & 0.85 & 1.00 & 0.60 & 0.23\\ 
  $\hat{c}_{MLE}$ & 0.83  & 0.60 & 1.00 & 0.76 \\ 
  $\hat{c}_{COV}$ & 0.40 & 0.23 & 0.76  &1.00 \\ 
   \hline
\end{tabular}

\caption{The covariance and the correlation matrices between four $\hat{c}$ statistic based on strong Monte Carlo sample of size 10\,000, for $c=2$ and $n=250$. As we can see $\hat{c}_{COV}$ and $\hat{c}_{QCM}$ are relatively low correlated with high $\hat{c}_{MLE}$ correlation which might suggest why their combination outperform both of the estimators separately.}\label{tab:correlations}
\end{table}

As before, we get that $T_n$ is asymptotically normal under the null hypothesis.

\begin{proposition}\label{pr:Tn.stat}
Let $X_i$, $i = 1,2, \ldots$, be i.i.d. random variables following $Lv(c)$ with some $c>0$. Also, let $T_n$ be given by~\eqref{eq:Tn}. Then, we get $T_n\overset{d}{\to} \mathcal{N}(0, \tau_T^2)$, as $n\to\infty$, where $\tau_T > 0$ is a constant independent of $c$.
\end{proposition}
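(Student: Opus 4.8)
The plan is to write $T_n$ as a smooth scalar transform of a vector of scale estimators and then combine the joint central limit theorem for that vector with the multivariate delta method. Concretely, set $g(u,v,w):=\frac{u+v}{2w}$, so that
\[
T_n=\sqrt{n}\Big(g\big(\hat c_{\textrm{COV}},\hat c_{\textrm{QCM}}(a,b),\hat c_{\textrm{MLE}}\big)-1\Big),
\]
and observe that $g(c,c,c)=\frac{2c}{2c}=1$; since all three estimators are consistent for $c$ under the null, the argument of $g$ converges in probability to $(c,c,c)$, a point where $g$ is smooth (the denominator stays bounded away from $0$). Thus $T_n=\sqrt{n}\big(g(\hat\theta_n)-g(\theta)\big)$ with $\theta=(c,c,c)$ and $\hat\theta_n:=(\hat c_{\textrm{COV}},\hat c_{\textrm{QCM}}(a,b),\hat c_{\textrm{MLE}})$, which is exactly the setting of the delta method once the joint limit law of $\hat\theta_n$ is known.

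First I would establish the joint asymptotic normality
\[
\sqrt{n}\,(\hat\theta_n-\theta)\overset{d}{\to}\mathcal N(0,\Sigma),
\]
which is the content of Proposition~\ref{pr:appendix}. The key to this step is that each coordinate admits an asymptotically linear (Bahadur-type) representation $\hat c_\bullet-c=\frac1n\sum_{i=1}^n\psi_\bullet(X_i)+o_P(n^{-1/2})$ with a square-integrable influence function $\psi_\bullet$: for $\hat c_{\textrm{MLE}}=1/\overline Y_n$ and $\hat c_{\textrm{COV}}$ this follows from the ordinary CLT for the sample means of $Y_i=1/X_i$ and $Z_i=\ln Y_i$ (which have all moments finite, $Y_i$ being Gamma-distributed) together with the delta method applied to the maps $\overline Y_n\mapsto 1/\overline Y_n$ and $(\overline{YZ}_n,\overline Y_n,\overline Z_n)\mapsto\hat c_{\textrm{COV}}$; for $\hat c_{\textrm{QCM}}(a,b)$, which is a trimmed order-statistic average, the linearization follows from the L-statistic theory used in~\cite{JelPit2018,PitCheWyl2021}. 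Stacking the influence functions and applying the multivariate CLT to the i.i.d.\ summands $(\psi_{\textrm{COV}},\psi_{\textrm{QCM}},\psi_{\textrm{MLE}})(X_i)$ yields the joint normal limit with $\Sigma=\cov\big((\psi_{\textrm{COV}},\psi_{\textrm{QCM}},\psi_{\textrm{MLE}})(X_1)\big)$.

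Given the joint CLT, the delta method immediately gives $T_n\overset{d}{\to}\mathcal N(0,\tau_T^2)$ with $\tau_T^2=\nabla g(\theta)^{\top}\Sigma\,\nabla g(\theta)$, where $\nabla g(c,c,c)=\big(\tfrac1{2c},\tfrac1{2c},-\tfrac1c\big)$. To see that $\tau_T^2$ does not depend on $c$, I would use the exact scaling of the model: if $(X_i)$ is i.i.d.\ from $Lv(c)$ then $(X_i/c)$ is i.i.d.\ from $Lv(1)$, and every estimator here is positively homogeneous of degree one in the data, so $\hat\theta_n^{(c)}\overset{d}{=}c\,\hat\theta_n^{(1)}$ jointly. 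Hence $\Sigma(c)=c^2\Sigma(1)$ while $\nabla g(c,c,c)=c^{-1}\nabla g(1,1,1)$, and the two powers of $c$ cancel in the quadratic form, giving the constant $\tau_T^2=\nabla g(1,1,1)^{\top}\Sigma(1)\nabla g(1,1,1)$. Positivity holds because $\tau_T^2$ is the variance of the nondegenerate linear combination $\tfrac1{2c}\psi_{\textrm{COV}}(X_1)+\tfrac1{2c}\psi_{\textrm{QCM}}(X_1)-\tfrac1c\psi_{\textrm{MLE}}(X_1)$ of the influence functions, which is not almost surely zero.

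The main obstacle is the joint CLT of the first step, and within it the treatment of $\hat c_{\textrm{QCM}}(a,b)$ alongside $\hat c_{\textrm{COV}}$ and $\hat c_{\textrm{MLE}}$: the former is an order-statistic functional while the latter two are smooth functions of ordinary sample means, and all three are built from the same sample, so their limiting covariances must be computed jointly rather than marginally. The cleanest route is the common influence-function representation sketched above, which reduces everything to a single multivariate CLT for i.i.d.\ vectors; this is precisely the unifying device carried out in Proposition~\ref{pr:appendix}, after which the remainder of the argument is the routine delta-method computation indicated here.
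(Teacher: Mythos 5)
Your proposal is correct and follows essentially the same route as the paper's Appendix proof of Proposition~\ref{pr:appendix}: an asymptotically linear (Bahadur-type) representation of each ingredient---the L-statistic linearization from~\cite{JelPit2018} for the QCM part and ordinary CLT arguments for the $Y_i$- and $Z_i$-based parts---followed by a multivariate CLT, Slutsky's theorem, and the delta method applied to a smooth ratio map (the paper phrases the delta step in terms of the primitive vector $[\hat\mu_X(a_1,b_1), \hat\mu_X(a_2,b_2), \hat\sigma^2_X, \bar{Y}_n, S_{n,Y,Z}]$ rather than the three scale estimators, but this is the same computation). Your explicit homogeneity argument showing $\Sigma(c)=c^2\Sigma(1)$ and $\nabla g(c,c,c)=c^{-1}\nabla g(1,1,1)$, hence $c$-independence of $\tau_T^2$, is a welcome detail that the paper's proof leaves implicit.
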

Proposition~\ref{pr:Tn.stat} is a direct implication of Proposition~\ref{pr:appendix}, which proof is deferred to the Appendix.  Proposition~\ref{pr:Tn.stat} is illustrated in Figure~\ref{fig:Ccorr2}.

 \begin{figure}[htp!]
\begin{center}
\includegraphics[width=0.26\textwidth]{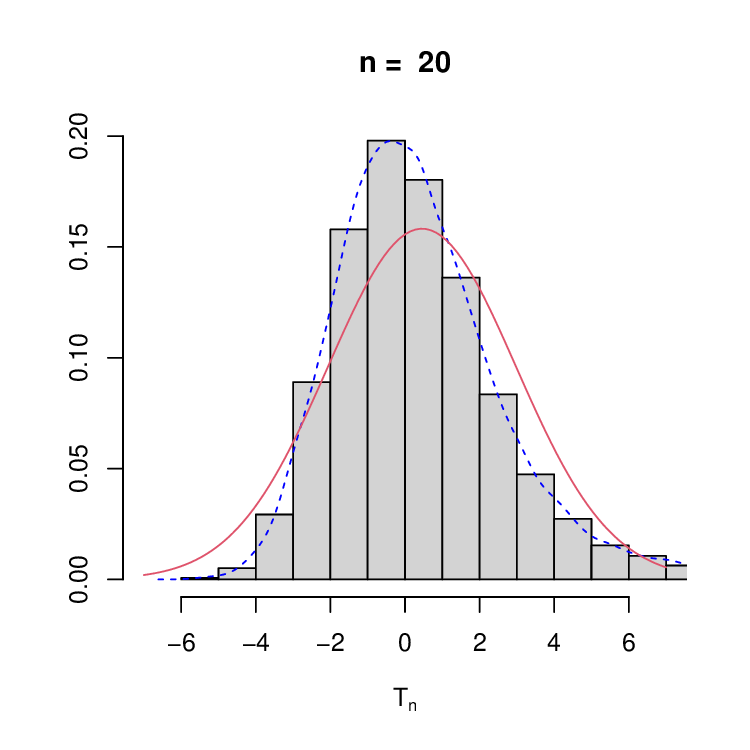}
\includegraphics[width=0.26\textwidth]{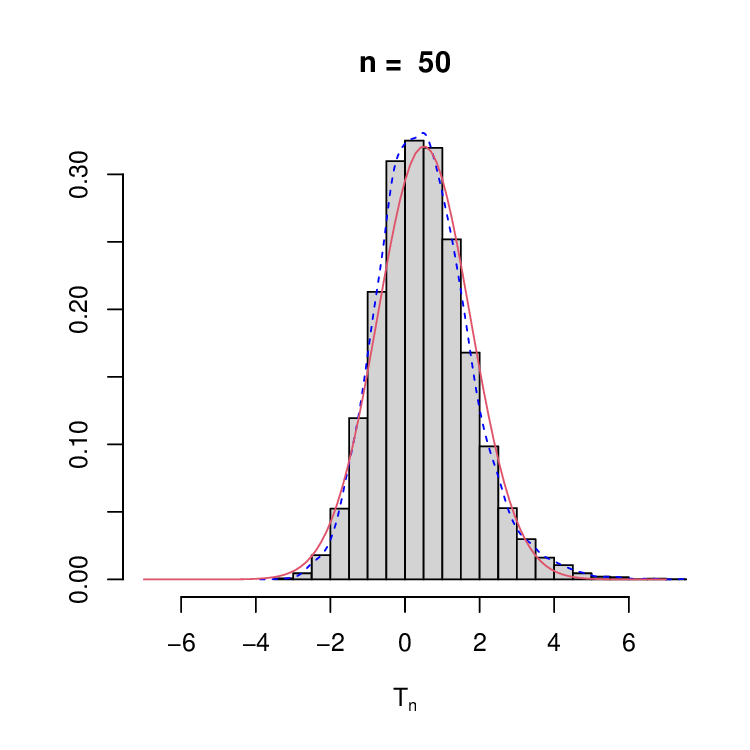}
\includegraphics[width=0.26\textwidth]{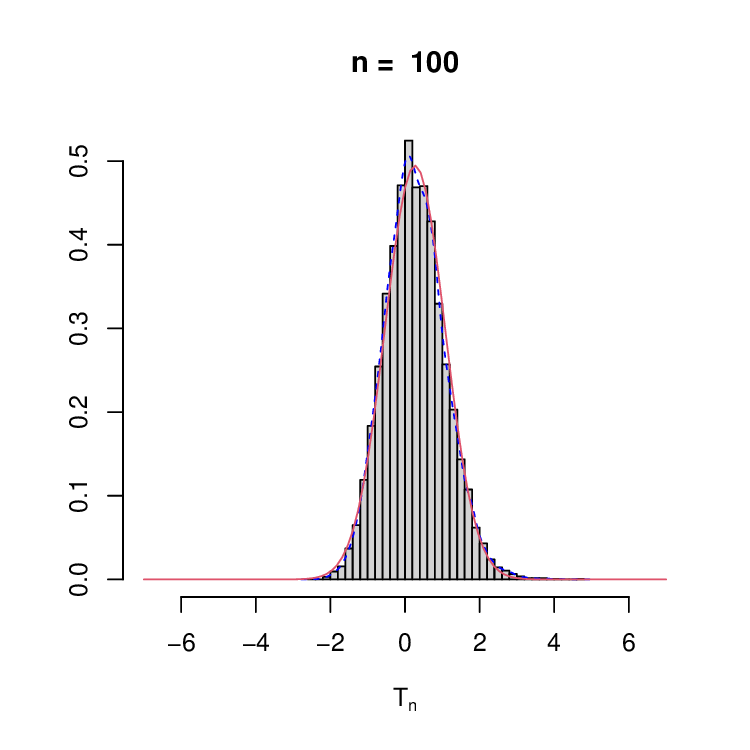}
\includegraphics[width=0.26\textwidth]{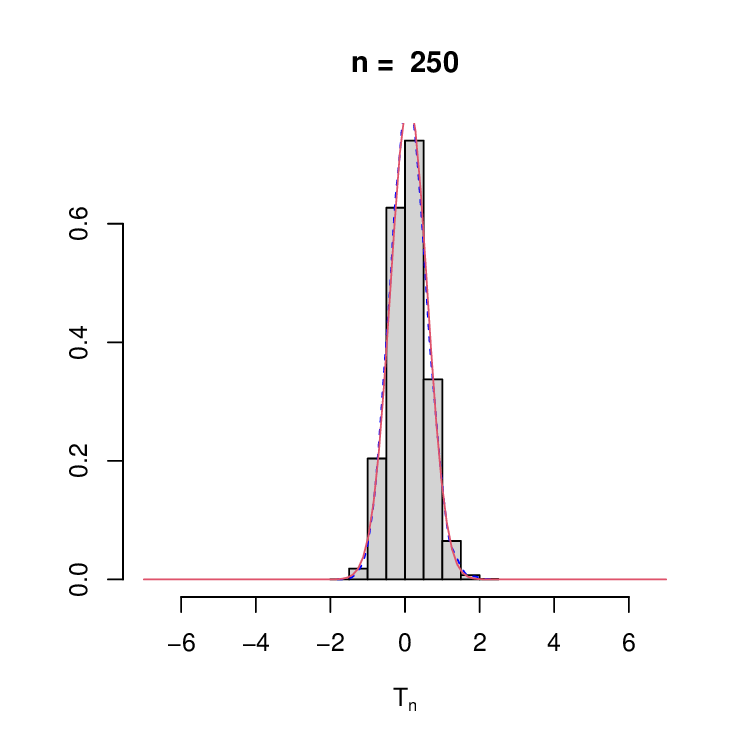}
\includegraphics[width=0.26\textwidth]{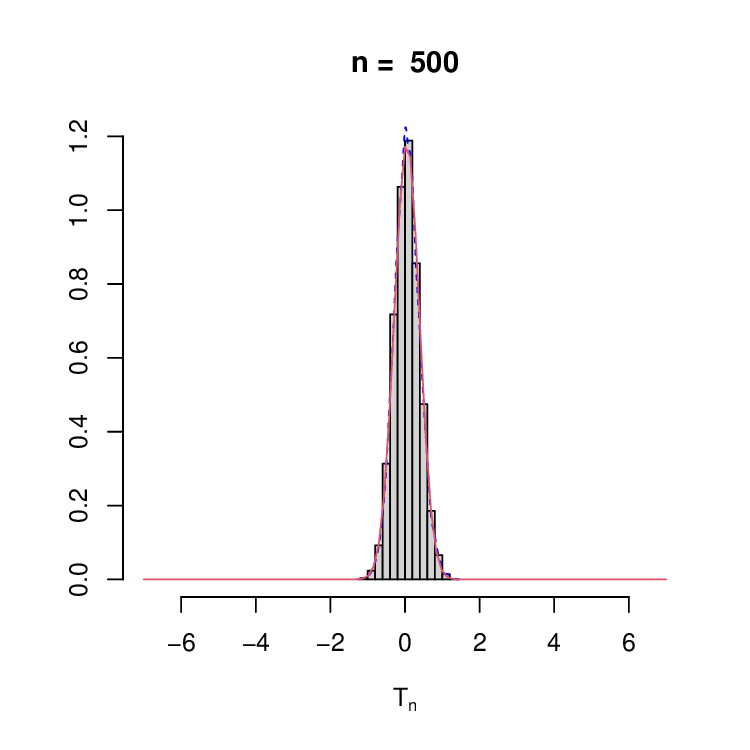}
\includegraphics[width=0.26\textwidth]{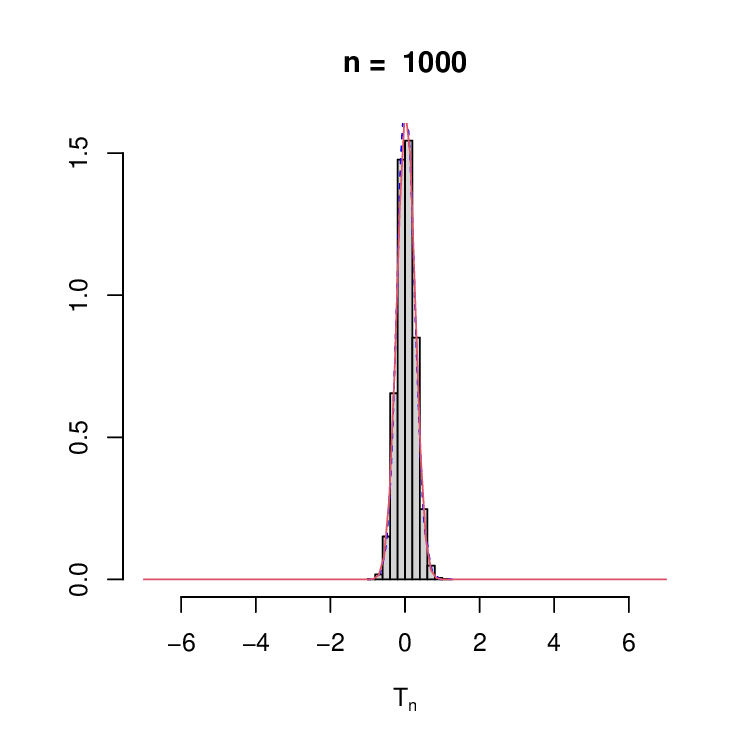}
\end{center}
\caption{Histogram of $T_n$ distribution under the null hypothesis with $n \in \{20,50,100,250,500,1000\}$. The blue dashed line represents the kernel density while the solid red line shows the fitted normal density. Each histogram is based on 10 000 strong Monte Carlo simulations.}
\label{fig:Ccorr2}
\end{figure}

\FloatBarrier
 
%%%%%%%%%%%%%%%%%%%%%%%%%%%%%%%%%%%%%%%%%%%%%%%%%%%%%%%%%
\newpage
\section{Test power}\label{sec4}

% {\color{red} MP: Please adjust this section. In particular: revise text, change $V*$ to $V_n$, etc.; add text with references to Figures/Tables, comment the results in main text, etc. Comment on the choice of alternative distributions and link to original article, comment about why they have chosen it and say our results for $V_n$ are consistent with theirs, etc., etc. This is too condensed and informal as of now.}

In this section we present the empirical power analysis for test statistics $V_n$, $O_n$, and $T_n$. Namely, we compare the power of test statistic $V_n$ introduced in \cite{KumBha2022}, with conditional-moment based test statistics $O_n$ and $T_n$, introduced in this paper. For completeness, we also consider two other tests statistics recently introduced in the literature. More specifically, following \cite{Lukic2023} and \cite{kumari2023}, respectively, we set

\begin{align}
    R_{a,n} & := \frac{3\sqrt{\pi}}{4n^2} \sum_{i,j=1}^{n} \left( \frac{1}{ \left(a + \frac{Y_i + Y_j}{4} \right)^{\frac{5}{2}}} - \frac{1}{2(a+Y_i)^{\frac{5}{2}}} - \frac{1}{2(a+Y_j)^{\frac{5}{2}}}\right),\\
    \hat{\Delta}_n &:= \frac{3}{2} \hat{U}_1 - \frac{\hat{c}_{MLE}}{2} \hat{U}_2 - \frac{1}{2},
\end{align}
where $Y_k := \frac{X_k}{\hat{c}_{MLE}}$ for $k\in\{1,\ldots,n\}$, while $\hat{U}_1$ and $\hat{U}_2$ are unbiased estimators of $\mathbb{E}\left( \frac{\min(X_1,X_2)}{X_1} \right)$ and $\mathbb{E}\left( \frac{\min(X_1,X_2)}{X^2_1} \right)$. The parameter $a>0$ in $R_{a,n}$ needs to be adjusted to the specific testing framework as it significantly affects the statistic performance. In the following, for simplicity, we set $a = 0.2$, which results in reasonable good performance for a relatively broad class of alternatives; see Table 3 in~\cite{Lukic2023} for details.

For transparency, most of the alternative distributions were taken from~\cite {KumBha2022}. This choice of alternatives facilitates the power analysis for various aspects of the underlying distribution, including heavy tails and undefined moments. Specific parameter choices and PDFs for alternative distributions are summarised in Table~\ref{tab:ProbDist}, see \cite {KumBha2022} for more details.
\begin{table}[ht!]
\begin{center}
\begin{tabular}{|c| c| c |} 
 \hline
 Distribution & Considered parameters & PDF \\ 
 \hline
 $\textrm{Gamma}(k,\theta)$ & $k = 2$, $\theta = 3 $  &$
 f(x) = \frac{1}{\Gamma(k)\theta^k}x^{k-1}e^{-\frac{x}{\theta}}$  
 \\ 
 \hline
 $\textrm{Chi-Squared}(k)$ & $k = 4$ & $ f(x) = \frac{1}{2^{k/2}\Gamma(k/2)}x^{k/2 -1}e^{\frac{-x}{2}}$  \\
 \hline
 $\textrm{Weibull}(\lambda,k)$ & $\lambda = 1.75$, $k = 1$ & $f(x) = \frac{k}{\lambda}\left(\frac{x}{\lambda}\right)^{k-1}e^{-(x/\lambda)^k}$  \\
 \hline
 $\textrm{Lognormal}(\mu,\sigma^2)$ & $\mu  = 0$, $\sigma^2 = 1$ & $f(x) = \frac{1}{x\sigma\sqrt{2\pi} }\exp\left( -\frac{(\ln(x)-\mu)^2}{2\sigma^2} \right)$  \\
 \hline
  $\textrm{Pareto}(\sigma,\alpha)$ & $\sigma = 0.75$, $\alpha = 1.0$ and $\sigma = 1.5, \alpha = 0.5$  & $f(x) = \frac{\alpha \sigma^\alpha}{x^{\alpha +1}}$  \\
 \hline
  $\textrm{Rayleigh}(\sigma)$ & $\sigma = 1.0$ & $f(x) = \frac{x}{\sigma^2}e^{-x^2/(2\sigma^2)}$  \\
 \hline
  $\textrm{Half-normal}(\sigma)$ & $\sigma =1.0$ & $f(x) = \frac{\sqrt{2}}{\sigma\sqrt{\pi} }\exp\left( -\frac{x^2}{2\sigma^2} \right)$ \\
 \hline
 $\textrm{Fr\'echet}(\alpha, s,m)$ & $\alpha = 0.0$, $s = 0.5$, $m = 1.0$ & $f(x) = \frac{\alpha}{s}\left(\frac{x-m}{x}\right)^{-1-\alpha}\exp\left(-\left(\frac{x-m}{s}\right)^{-\alpha}\right)$ \\
 \hline
  $|\textrm{Log-Gamma}(k,\theta)|$ &  $ k = 3$, $\theta = 2$ & $f(x) = \frac{1}{\Gamma(k)}\big(e^{-\frac{e^x}{\theta}}\left( \frac{e^x}{\theta} \right)^k+e^{-\frac{e^{-x}}{\theta}}\left( \frac{e^{-x}}{\theta} \right)^k\big)$ \\
 \hline
 $\textrm{Inv-Gaussian}(\mu,\lambda)$ & $\mu = 1.0 $, $\lambda = 1.5$ & $f(x) = \sqrt{\frac{\lambda}{2\pi x^3}}\exp \left(-\frac{\lambda(x-\mu)^2}{2\mu^2x} \right)$ \\
 \hline
 $\textrm{Burr}(\mu,\eta,\sigma)$ & $\mu = 1.5$, $\eta = 0.5$, $\sigma = 0.5$  & $f(x) = \frac{\eta\sigma(x/\mu)^{\sigma-1}}{\mu(1+(x/\mu)^{\sigma})^{\eta + 1}}$ \\
 \hline
\end{tabular}
\caption{Probability distribution used in the alternative hypothesis for the power simulation. Note that all probability density functions (PDFs) have $[0,\infty)$ support.}
\label{tab:ProbDist}
\end{center}
\end{table}
\FloatBarrier

% \begin{itemize}
%     \item Gamma(2,3)
%     \item  Chi-Squared(4)
%     \item Weibull(1.75,1.0)
%     \item  Lognormal(0.0,1.0)
%     \item Pareto(0.75,1.0)
%     \item Rayleigh(1.0)
%     \item Half-Normal(1.0)
%     \item Fr\'{e}chet(0.0,0.5,1.0)
%     \item $|Log-Gamma(3,2)|$
%     \item Pareto(1.5,0.5)
%     \item Inverse-Gaussian(1.0,1.5)
%     \item Bur(1.5,0.5,0.5)
% \end{itemize}

For each alternative, the power of the tests was calculated empirically using a standard strong Monte Carlo framework. For any test, we considered two-sided rejection regions and used simulated rejection thresholds obtained using a strong 10 000 samples from the null hypothesis with the underlying sample size and the test size. In all cases, the test power is calculated based on 10 000 strong Monte Carlo sample from the alternative hypothesis. 

The results for the significance level $5\%$ and $1\%$ are presented in Table~\ref{tab:power_0.05} and Table~\ref{tab:power_0.01}, respectively. 
\begin{table}[]
\scalebox{1.0}{
\begin{tabular}{|r|cc|ccc|cc|ccc|cc|ccc|}
\hline
    & \multicolumn{5}{c|}{Gamma(2,3)}                                               & \multicolumn{5}{c|}{Chi-Squared(4)}                                           & \multicolumn{5}{c|}{Weibull(1.75,1)}                                           \\\hline
n   & $O_n$         & $T_n$         & $V_n$         & $R_{0.2,n}$   & $\Delta_n$    & $O_n$         & $T_n$         & $V_n$         & $R_{0.2,n}$   & $\Delta_n$    & $O_n$         & $T_n$         & $V_n$         & $R_{0.2,n}$   & $\Delta_n$     \\ \hline
20  & \textbf{1.00} & 0.92          & 0.82          & 0.75          & 0.99          & \textbf{1.00} & 0.91          & 0.82          & 0.74          & 0.99          & \textbf{1.00} & 0.95          & 0.84          & 0.79          & \textbf{ 1.00} \\
30  & \textbf{1.00} & 0.99          & 0.87          & 0.84          & \textbf{1.00} & \textbf{1.00} & 0.99          & 0.87          & 0.83          & \textbf{1.00} & \textbf{1.00} & \textbf{1.00} & 0.88          & 0.87          & \textbf{1.00}  \\
50  & \textbf{1.00} & \textbf{1.00} & 0.92          & 0.94          & \textbf{1.00} & \textbf{1.00} & \textbf{1.00} & 0.91          & 0.94          & \textbf{1.00} & \textbf{1.00} & \textbf{1.00} & 0.91          & 0.96          & \textbf{1.00}  \\
80  & \textbf{1.00} & \textbf{1.00} & 0.94          & 0.99          & \textbf{1.00} & \textbf{1.00} & \textbf{1.00} & 0.94          & \textbf{1.00} & \textbf{1.00} & \textbf{1.00} & \textbf{1.00} & 0.94          & 0.99          & \textbf{1.00}  \\
100 & \textbf{1.00} & \textbf{1.00} & 0.95          & \textbf{1.00} & \textbf{1.00} & \textbf{1.00} & \textbf{1.00} & 0.95          & \textbf{1.00} & \textbf{1.00} & \textbf{1.00} & \textbf{1.00} & 0.95          & \textbf{1.00} & \textbf{1.00}  \\
120 & \textbf{1.00} & \textbf{1.00} & 0.96          & \textbf{1.00} & \textbf{1.00} & \textbf{1.00} & \textbf{1.00} & 0.96          & \textbf{1.00} & \textbf{1.00} & \textbf{1.00} & \textbf{1.00} & 0.96          & \textbf{1.00} & \textbf{1.00}  \\
250 & \textbf{1.00} & \textbf{1.00} & 0.98          & \textbf{1.00} & \textbf{1.00} & \textbf{1.00} & \textbf{1.00} & 0.98          & \textbf{1.00} & \textbf{1.00} & \textbf{1.00} & \textbf{1.00} & 0.97          & \textbf{1.00} & \textbf{1.00}  \\ \hline
    & \multicolumn{5}{c|}{Pareto(0.75,1.0)}                                         & \multicolumn{5}{c|}{Rayleigh(1.0)}                                            & \multicolumn{5}{c|}{Half-normal(1.0)}                                          \\\hline
n   & $O_n$         & $T_n$         & $V_n$         & $R_{0.2,n}$   & $\Delta_n$    & $O_n$         & $T_n$         & $V_n$         & $R_{0.2,n}$   & $\Delta_n$    & $O_n$         & $T_n$         & $V_n$         & $R_{0.2,n}$   & $\Delta_n$     \\ \hline
20  & 0.55          & 0.80          & 0.85 & \textbf{0.99}          & 0.96          & \textbf{1.00} & 0.99          & 0.92          & 0.90          & 1.00          & \textbf{0.92} & 0.63          & 0.46          & 0.31          & 0.77           \\
30  & 0.72          & 0.92          & \textbf{0.96} & \textbf{1.00} & \textbf{1.00} & \textbf{1.00} & 0.99          & 0.95          & 0.90          & \textbf{1.00} & \textbf{0.99} & 0.81          & 0.50          & 0.31          & 0.89           \\
50  & 0.89          & 0.99          & \textbf{1.00} & \textbf{1.00} & \textbf{1.00} & \textbf{1.00} & \textbf{1.00} & 0.97          & 0.99          & \textbf{1.00} & \textbf{1.00} & 0.97          & 0.55          & 0.35          & 0.98           \\
80  & 0.98          & \textbf{1.00} & \textbf{1.00} & \textbf{1.00} & \textbf{1.00} & \textbf{1.00} & \textbf{1.00} & 0.98          & \textbf{1.00} & \textbf{1.00} & \textbf{1.00} & \textbf{1.00} & 0.60          & 0.38          & \textbf{1.00}  \\
100 & 0.99          & \textbf{1.00} & \textbf{1.00} & \textbf{1.00} & \textbf{1.00} & \textbf{1.00} & \textbf{1.00} & 0.98          & \textbf{1.00} & \textbf{1.00} & \textbf{1.00} & \textbf{1.00} & 0.62          & 0.39          & \textbf{1.00}  \\
120 & \textbf{1.00} & \textbf{1.00} & \textbf{1.00} & \textbf{1.00} & \textbf{1.00} & \textbf{1.00} & \textbf{1.00} & 0.99          & \textbf{1.00} & \textbf{1.00} & \textbf{1.00} & \textbf{1.00} & 0.64          & 0.40          & \textbf{1.00}  \\
250 & \textbf{1.00} & \textbf{1.00} & \textbf{1.00} & \textbf{1.00} & \textbf{1.00} & \textbf{1.00} & \textbf{1.00} & 0.99          & \textbf{1.00} & \textbf{1.00} & \textbf{1.00} & \textbf{1.00} & 0.75          & 0.61          & \textbf{1.00}  \\ \hline
    & \multicolumn{5}{c|}{$|Log-Gamma(3, 2)|$}                                      & \multicolumn{5}{c|}{Pareto(1.5,0.5)}                                          & \multicolumn{5}{c|}{Inv-Gaussian(1.0,1.5)}                                     \\\hline
n   & $O_n$         & $T_n$         & $V_n$         & $R_{0.2,n}$   & $\Delta_n$    & $O_n$         & $T_n$         & $V_n$         & $R_{0.2,n}$   & $\Delta_n$    & $O_n$         & $T_n$         & $V_n$         & $R_{0.2,n}$   & $\Delta_n$     \\ \hline
20  & \textbf{0.98} & 0.78          & 0.55          & 0.44          & 0.93          & \textbf{1.00} & \textbf{1.00} & \textbf{1.00} & \textbf{1.00} & \textbf{1.00} & \textbf{0.91} & 0.90          & 0.87          & 0.99          & \textbf{1.00}  \\
30  & \textbf{1.00} & 0.93          & 0.58          & 0.46          & 0.99          & \textbf{1.00} & \textbf{1.00} & \textbf{1.00} & \textbf{1.00} & \textbf{1.00} & 0.99 & 0.99 & 0.98          & \textbf{1.00} & \textbf{1.00}  \\
50  & \textbf{1.00} & 0.99          & 0.62          & 0.48          & \textbf{1.00} & \textbf{1.00} & \textbf{1.00} & \textbf{1.00} & \textbf{1.00} & \textbf{1.00} & \textbf{1.00} & \textbf{1.00} & \textbf{1.00} & \textbf{1.00} & \textbf{1.00}  \\
80  & \textbf{1.00} & \textbf{1.00} & 0.65          & 0.49          & \textbf{1.00} & \textbf{1.00} & \textbf{1.00} & \textbf{1.00} & \textbf{1.00} & \textbf{1.00} & \textbf{1.00} & \textbf{1.00} & \textbf{1.00} & \textbf{1.00} & \textbf{1.00}  \\
100 & \textbf{1.00} & \textbf{1.00} & 0.66          & 0.57          & \textbf{1.00} & \textbf{1.00} & \textbf{1.00} & \textbf{1.00} & \textbf{1.00} & \textbf{1.00} & \textbf{1.00} & \textbf{1.00} & \textbf{1.00} & \textbf{1.00} & \textbf{1.00}  \\
120 & \textbf{1.00} & \textbf{1.00} & 0.68          & 0.61          & \textbf{1.00} & \textbf{1.00} & \textbf{1.00} & \textbf{1.00} & \textbf{1.00} & \textbf{1.00} & \textbf{1.00} & \textbf{1.00} & \textbf{1.00} & \textbf{1.00} & \textbf{1.00}  \\
250 & \textbf{1.00} & \textbf{1.00} & 0.75          & 0.84          & \textbf{1.00} & \textbf{1.00} & \textbf{1.00} & \textbf{1.00} & \textbf{1.00} & \textbf{1.00} & \textbf{1.00} & \textbf{1.00} & \textbf{1.00} & \textbf{1.00} & \textbf{1.00}  \\ \hline
    & \multicolumn{5}{c|}{Lognormal(0,1)}                                           & \multicolumn{5}{c|}{Fréchet(0.0,0.5,1.0)}                                     & \multicolumn{5}{c|}{Burr(1.5,0.5,0.5)}                                         \\\hline
n   & $O_n$         & $T_n$         & $V_n$         & $R_{0.2,n}$   & $\Delta_n$    & $O_n$         & $T_n$         & $V_n$         & $R_{0.2,n}$   & $\Delta_n$    & $O_n$         & $T_n$         & $V_n$         & $R_{0.2,n}$   & $\Delta_n$     \\ \hline
20  & \textbf{0.89} & 0.75          & 0.70          & 0.55          & 0.85          & 0.51          & 0.60          & \textbf{0.62} & 0.53          & 0.57          & 0.84          & 0.80          & \textbf{1.00} & 0.49          & 0.99           \\
30  & \textbf{0.99} & 0.92          & 0.82          & 0.69          & 0.98          & 0.72          & \textbf{0.82} & \textbf{0.82} & 0.71          & 0.78          & 0.96          & 0.90          & \textbf{1.00} & 0.58          & \textbf{1.00}  \\
50  & \textbf{1.00} & 0.99          & 0.93          & 0.88          & \textbf{1.00} & 0.93          & \textbf{0.97} & \textbf{0.97} & 0.91          & 0.96          & \textbf{1.00} & 0.97          & \textbf{1.00} & 0.61          & \textbf{1.00}  \\
80  & \textbf{1.00} & \textbf{1.00} & 0.98          & 0.97          & \textbf{1.00} & 0.99          & \textbf{1.00} & \textbf{1.00} & \textbf{1.00} & \textbf{1.00} & \textbf{1.00} & \textbf{1.00} & \textbf{1.00} & 0.67          & \textbf{1.00}  \\
100 & \textbf{1.00} & \textbf{1.00} & 0.99          & \textbf{1.00} & \textbf{1.00} & \textbf{1.00} & \textbf{1.00} & \textbf{1.00} & \textbf{1.00} & \textbf{1.00} & \textbf{1.00} & \textbf{1.00} & \textbf{1.00} & 0.68          & \textbf{1.00}  \\
120 & \textbf{1.00} & \textbf{1.00} & 0.99          & \textbf{1.00} & \textbf{1.00} & \textbf{1.00} & \textbf{1.00} & \textbf{1.00} & \textbf{1.00} & \textbf{1.00} & \textbf{1.00} & \textbf{1.00} & \textbf{1.00} & 0.99          & \textbf{1.00}  \\
250 & \textbf{1.00} & \textbf{1.00} & \textbf{1.00} & \textbf{1.00} & \textbf{1.00} & \textbf{1.00} & \textbf{1.00} & \textbf{1.00} & \textbf{1.00} & \textbf{1.00} & \textbf{1.00} & \textbf{1.00} & \textbf{1.00} & \textbf{1.00}  & \textbf{1.00}  \\ \hline
\end{tabular}
\caption{Power of the tests, obtained by empirical distribution given by 10\,000 strong Monte Carlo simulations for each alternative hypothesis at significance level $5\%$.}\label{tab:power_0.05}}
\end{table}
\begin{table}[]
\scalebox{1.0}{
\begin{tabular}{|r|cc|ccc|cc|ccc|cc|ccc|}
\hline
    & \multicolumn{5}{c|}{Gamma(2,3)}                                               & \multicolumn{5}{c|}{Chi-Squared(4)}                                           & \multicolumn{5}{c|}{Weibull(1.75,1)}                                          \\\hline
n   & $O_n$         & $T_n$         & $V_n$         & $R_{0.2,n}$   & $\Delta_n$    & $O_n$         & $T_n$         & $V_n$         & $R_{0.2,n}$   & $\Delta_n$    & $O_n$         & $T_n$         & $V_n$         & $R_{0.2,n}$   & $\Delta_n$    \\ \hline
20  & \textbf{0.96} & 0.73          & 0.67          & 0.6           & 0.94          & \textbf{0.96} & 0.72          & 0.67          & 0.58          & 0.94          & \textbf{0.99} & 0.78          & 0.73          & 0.68          & 0.98          \\
30  & \textbf{1.00} & 0.88          & 0.77          & 0.72          & \textbf{1.00} & \textbf{1.00} & 0.88          & 0.77          & 0.71          & \textbf{1.00} & \textbf{1.00} & 0.91          & 0.81          & 0.79          & \textbf{1.00} \\
50  & \textbf{1.00} & 0.99          & 0.87          & 0.88          & \textbf{1.00} & \textbf{1.00} & 0.99          & 0.87          & 0.71          & \textbf{1.00} & \textbf{1.00} & \textbf{1.00} & 0.87          & 0.91          & \textbf{1.00} \\
80  & \textbf{1.00} & \textbf{1.00} & 0.91          & 0.97          & \textbf{1.00} & \textbf{1.00} & \textbf{1.00} & 0.91          & 0.97          & \textbf{1.00} & \textbf{1.00} & \textbf{1.00} & 0.91          & 0.98          & \textbf{1.00} \\
100 & \textbf{1.00} & \textbf{1.00} & 0.93          & 0.99          & \textbf{1.00} & \textbf{1.00} & \textbf{1.00} & 0.93          & 0.97          & \textbf{1.00} & \textbf{1.00} & \textbf{1.00} & 0.93          & 0.99          & \textbf{1.00} \\
120 & \textbf{1.00} & \textbf{1.00} & 0.94          & \textbf{1.00} & \textbf{1.00} & \textbf{1.00} & \textbf{1.00} & 0.95          & \textbf{1.00} & \textbf{1.00} & \textbf{1.00} & \textbf{1.00} & 0.94          & \textbf{1.00} & \textbf{1.00} \\
250 & \textbf{1.00} & \textbf{1.00} & 0.97          & \textbf{1.00} & \textbf{1.00} & \textbf{1.00} & \textbf{1.00} & 0.97          & \textbf{1.00} & \textbf{1.00} & \textbf{1.00} & \textbf{1.00} & 0.96          & \textbf{1.00} & \textbf{1.00} \\ \hline
    & \multicolumn{5}{c|}{Pareto(0.75,1.0)}                                         & \multicolumn{5}{c|}{Rayleigh(1.0)}                                            & \multicolumn{5}{c|}{Half-normal(1.0)}                                         \\\hline
n   & $O_n$         & $T_n$         & $V_n$         & $R_{0.2,n}$   & $\Delta_n$    & $O_n$         & $T_n$         & $V_n$         & $R_{0.2,n}$   & $\Delta_n$    & $O_n$         & $T_n$         & $V_n$         & $R_{0.2,n}$   & $\Delta_n$    \\ \hline
20  & 0.31          & 0.58          & 0.61          & \textbf{0.94}          & 0.85          & \textbf{1.00} & 0.90          & 0.86          & 0.83          & \textbf{1.00} & \textbf{0.68} & 0.39          & 0.30          & 0.18          & 0.53          \\
30  & 0.53          & 0.82          & 0.85 & \textbf{1.00} & 0.98          & \textbf{1.00} & 0.98          & 0.91          & 0.91          & \textbf{1.00} & \textbf{0.92} & 0.61          & 0.35          & 0.19          & 0.75          \\
50  & 0.75          & 0.97          & \textbf{0.99} & \textbf{1.00} & \textbf{1.00} & \textbf{1.00} & \textbf{1.00} & 0.95          & 0.98          & \textbf{1.00} & \textbf{1.00} & 0.97          & 0.42          & 0.20          & 0.93          \\
80  & 0.93          & \textbf{1.00} & \textbf{1.00} & \textbf{1.00} & \textbf{1.00} & \textbf{1.00} & \textbf{1.00} & 0.97          & \textbf{1.00} & \textbf{1.00} & \textbf{1.00} & \textbf{1.00} & 0.49          & 0.24          & 0.99          \\
100 & 0.96          & \textbf{1.00} & \textbf{1.00} & \textbf{1.00} & \textbf{1.00} & \textbf{1.00} & \textbf{1.00} & 0.98          & \textbf{1.00} & \textbf{1.00} & \textbf{1.00} & \textbf{1.00} & 0.52          & 0.27          & \textbf{1.00} \\
120 & 0.99          & \textbf{1.00} & \textbf{1.00} & \textbf{1.00} & \textbf{1.00} & \textbf{1.00} & \textbf{1.00} & 0.98          & \textbf{1.00} & \textbf{1.00} & \textbf{1.00} & \textbf{1.00} & 0.54          & 0.31          & \textbf{1.00} \\
250 & \textbf{1.00} & \textbf{1.00} & \textbf{1.00} & \textbf{1.00} & \textbf{1.00} & \textbf{1.00} & \textbf{1.00} & 0.99          & \textbf{1.00} & \textbf{1.00} & \textbf{1.00} & \textbf{1.00} & 0.68          & 0.47          & \textbf{1.00} \\ \hline
    & \multicolumn{5}{c|}{$|Log-Gamma(3, 2)|$}                                      & \multicolumn{5}{c|}{Pareto(1.5,0.5)}                                          & \multicolumn{5}{c|}{Inv-Gaussian(1.0,1.5)}                                    \\\hline
n   & $O_n$         & $T_n$         & $V_n$         & $R_{0.2,n}$   & $\Delta_n$    & $O_n$         & $T_n$         & $V_n$         & $R_{0.2,n}$   & $\Delta_n$    & $O_n$         & $T_n$         & $V_n$         & $R_{0.2,n}$   & $\Delta_n$    \\ \hline
20  & \textbf{0.89} & 0.51          & 0.40          & 0.32          & 0.81          & 0.99          & \textbf{1.00} & \textbf{1.00} & \textbf{1.00} & \textbf{1.00} & 0.56          & 0.61 & 0.59          & 0.95          & \textbf{0.99 }         \\
30  & \textbf{0.99} & 0.76          & 0.44          & 0.33          & 0.95          & \textbf{1.00} & \textbf{1.00} & \textbf{1.00} & \textbf{1.00} & \textbf{1.00} & 0.90          & 0.92 & 0.86          & \textbf{1.00} & \textbf{1.00} \\
50  & \textbf{1.00} & \textbf{1.00} & 0.50          & 0.34          & \textbf{1.00} & \textbf{1.00} & \textbf{1.00} & \textbf{1.00} & \textbf{1.00} & \textbf{1.00} & \textbf{1.00} & \textbf{1.00} & 0.99          & \textbf{1.00} & \textbf{1.00} \\
80  & \textbf{1.00} & \textbf{1.00} & 0.53          & 0.35          & \textbf{1.00} & \textbf{1.00} & \textbf{1.00} & \textbf{1.00} & \textbf{1.00} & \textbf{1.00} & \textbf{1.00} & \textbf{1.00} & \textbf{1.00} & \textbf{1.00} & \textbf{1.00} \\
100 & \textbf{1.00} & \textbf{1.00} & 0.56          & 0.44          & \textbf{1.00} & \textbf{1.00} & \textbf{1.00} & \textbf{1.00} & \textbf{1.00} & \textbf{1.00} & \textbf{1.00} & \textbf{1.00} & \textbf{1.00} & \textbf{1.00} & \textbf{1.00} \\
120 & \textbf{1.00} & \textbf{1.00} & 0.57          & \textbf{0.48} & \textbf{1.00} & \textbf{1.00} & \textbf{1.00} & \textbf{1.00} & \textbf{1.00} & \textbf{1.00} & \textbf{1.00} & \textbf{1.00} & \textbf{1.00} & \textbf{1.00} & \textbf{1.00} \\
250 & \textbf{1.00} & \textbf{1.00} & 0.67          & 0.74          & \textbf{1.00} & \textbf{1.00} & \textbf{1.00} & \textbf{1.00} & \textbf{1.00} & \textbf{1.00} & \textbf{1.00} & \textbf{1.00} & \textbf{1.00} & \textbf{1.00} & \textbf{1.00} \\ \hline
    & \multicolumn{5}{c|}{Lognormal(0,1)}                                           & \multicolumn{5}{c|}{Fréchet(0.0,0.5,1.0)}                                     & \multicolumn{5}{c|}{Burr(1.5,0.5,0.5)}                                        \\\hline
n   & $O_n$         & $T_n$         & $V_n$         & $R_{0.2,n}$   & $\Delta_n$    & $O_n$         & $T_n$         & $V_n$         & $R_{0.2,n}$   & $\Delta_n$    & $O_n$         & $T_n$         & $V_n$         & $R_{0.2,n}$   & $\Delta_n$    \\ \hline
20  & \textbf{0.57}          & 0.48          & 0.44          & 0.32          & 0.56          & 0.23          & \textbf{0.31} & 0.30          & 0.26          & 0.28          & 0.59          & 0.89          & \textbf{0.99} & 0.40          & 0.98          \\
30  & \textbf{0.90}          & 0.76          & 0.63          & 0.48          & 0.87          & 0.47          & \textbf{0.59} & 0.54          & 0.44          & 0.54          & 0.89          & 0.97          & \textbf{1.00} & 0.43          & \textbf{1.00} \\
50  & \textbf{1.00} & 0.97          & 0.85          & 0.73          & \textbf{1.00} & 0.78          & \textbf{0.88} & 0.87          & 0.75          & 0.83          & 0.99          & \textbf{1.00} & \textbf{1.00} & 0.50          & \textbf{1.00} \\
80  & \textbf{1.00} & \textbf{1.00} & 0.95          & 0.89          & \textbf{1.00} & 0.96          & \textbf{0.99} & \textbf{0.99}          & 0.93          & 0.99          & \textbf{1.00} & \textbf{1.00} & \textbf{1.00} & 0.58          & \textbf{1.00} \\
100 & \textbf{1.00} & \textbf{1.00} & 0.97          & 0.97          & \textbf{1.00} & 0.98          & \textbf{1.00} & \textbf{1.00} & 0.99          & \textbf{1.00} & \textbf{1.00} & \textbf{1.00} & \textbf{1.00} & 0.61          & \textbf{1.00} \\
120 & \textbf{1.00} & \textbf{1.00} & 0.99          & 0.99          & \textbf{1.00} & \textbf{1.00} & \textbf{1.00} & \textbf{1.00} & \textbf{1.00} & \textbf{1.00} & \textbf{1.00} & \textbf{1.00} & \textbf{1.00} & 0.72          & \textbf{1.00} \\
250 & \textbf{1.00} & \textbf{1.00} & \textbf{1.00} & \textbf{1.00} & \textbf{1.00} & \textbf{1.00} & \textbf{1.00} & \textbf{1.00} & \textbf{1.00} & \textbf{1.00} & \textbf{1.00} & \textbf{1.00} & \textbf{1.00} & 0.96          & \textbf{1.00} \\ \hline
\end{tabular}
\caption{Power of the tests, obtained by empirical distribution given by 10\,000 strong Monte Carlo simulations for each alternative hypothesis at significance level $1\%$.}\label{tab:power_0.01}}
\end{table} 
From the tables we can see that, despite its simplicity, $O_n$ test provides a satisfactory test power in most of the cases. In particular, $O_n$ outperforms $V_n$ test for almost all of the alternative hypotheses. This is particularly visible for the Half-normal(1.0) distribution and the significance level $0.01$, where the power of the $O_n$ test is more than twice as large as the power of the $V_n$ test, and $O_n$ outperforms all benchmark tests. To better explain the differences in the performance of the statistics, in Figure~\ref{fig:PowerHist} we present the empirical distributions of $O_n$, $T_n$, and $V_n$ test statistics for the null and selected alternative hypothesis. Note that the better the test statistic is, the more effectively it distinguishes distributions under different hypotheses. 

 \begin{figure}[htp!]
\begin{center}
\includegraphics[width=0.32\textwidth]{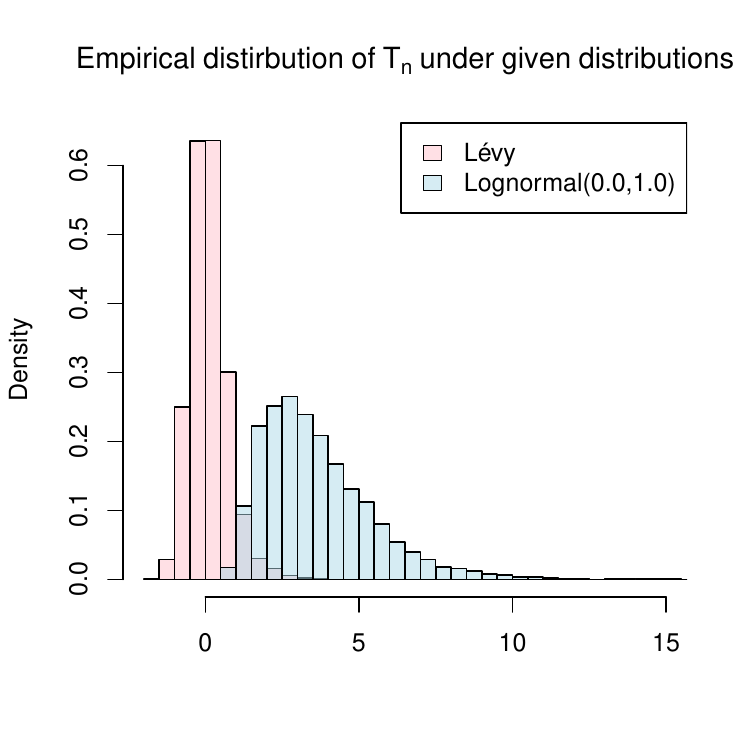}
\includegraphics[width=0.32\textwidth]{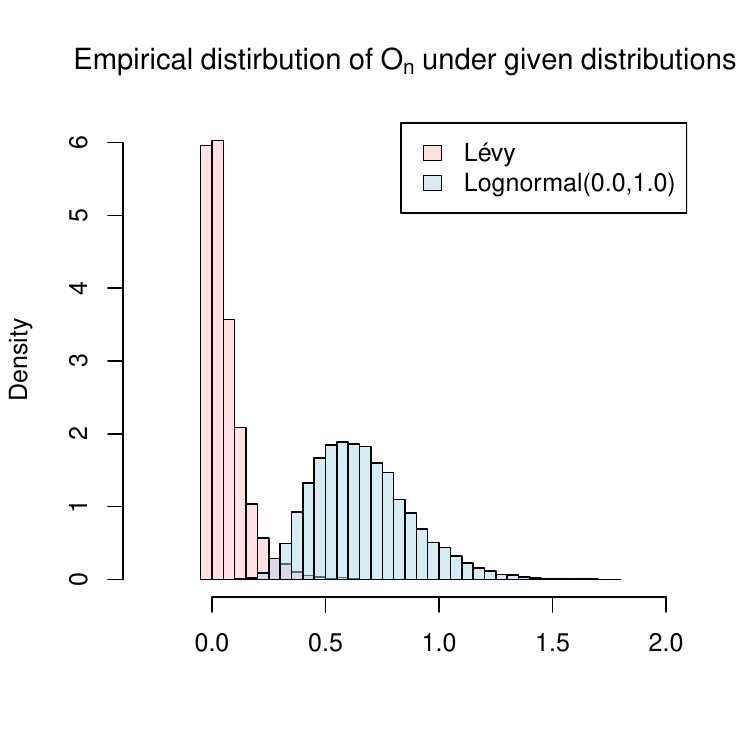}
\includegraphics[width=0.32\textwidth]{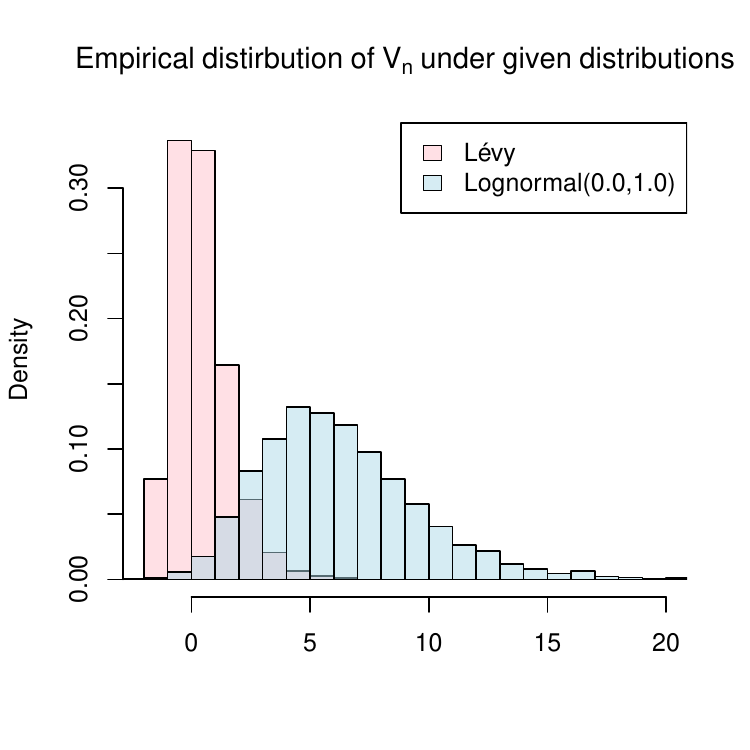}
\end{center}
\caption{Empirical distributions of the tests $V_n$, $T_n$, and $O_n$ under the null (one-sided L\'{e}vy distribution) hypothesis and the exemplary alternative hypothesis ($\textrm{Lognormal}(0,1)$ distribution). The results are based on 10 000 strong Monte Carlo samples with $n=30$. Note that the better test statistics more effectively separate distributions under the different hypotheses.}
\label{fig:PowerHist}
\end{figure}
% \FloatBarrier

To sum up, we see that test statistics $O_n$ and $T_n$ frequently outperforms test statistics $V_n$ and $R_{0.2,n}$, and give comparable results with $\Delta_n$, in most of the considered cases. Note that in~\cite{KumBha2022} it has been shown that $V_n$ outperforms the {\it (adjusted) jackknife empirical likelihood} method introduced in \cite{BhaKat2020}, which is the reason we did not include the corresponding test statistics in our study. Consequently, despite their generic formulation based on (sample) conditional moments, test statistics $O_n$ and $T_n$ could be used to complement or refine existing goodness-of-fit testing frameworks.

%%%%%%%%%%%%%%%%%%%%%%%%%%%%%%%%%%%%%%%%%%%%%%%%%%%%%%%%%
\section{Location invariant goodness-of-fit test}\label{sec5}

One of the key drawbacks of all considered test statistics, i.e. $O_n$, $T_n$,  $V_n$, $R_{a,n}$, and $\Delta_n$, is the fact that they are not translation invariant and hence tailored only to cover that case when the location is fixed, i.e. the lower support boundary is set as zero. To show how to remediate this, in this section we introduce a location invariant goodness-of-fit test that can be used to check if data follows  a generic one-sided L\'{e}vy distribution in the location-scale parametrisation, i.e. where the null hypothesis is enriched to cover the distributions of the shifted $X+\mu$ random variable, where $\mu\in \mathbb{R}$ and $X\sim Lv(c)$, $c>0$. Please recall that the class of such distributions are generically denoted by $L(\mu,c)$, see Section~\ref{S:preliminaries}.

Again, it should be emphasized that all five test statistics $V_n$, $O_n$, $R_{a,n}$, $\Delta_n$, and $T_n$ are not location invariant, i.e. they cannot be directly applied to a framework where the location parameter $\mu$ is unknown. In particular,  inverse transform and logarithmic transform, which are used in the definition of $V_n$ and $T_n$ statistics are not linear with respect to the location change. In fact, due to the usage of the logarithm transform, these statistics can only be applied to non-negative data.

To overcome the lack of location invariance, one may simply apply some data pre-normalisation transformation, e.g. based on the shift of all observations by their minimum value. However, this greatly complicates the analysis and non-trivially impacts the test performance. Instead, we propose a new test statistic which is given by the ratio of suitable QCVs. More specifically, we consider a new test statistic given by
\begin{align}
C_n:=\sqrt{ n} \left( \frac{\hat{c}_{\textrm{QCV}}(a_1,b_1)}{\hat{c}_{\textrm{QCV}}(a_2,b_2)} - 1\right),
\end{align}
where $0\leq a_1<b_1<1$ and $0\leq a_2<b_2<1$ are some fixed quantile splits.
Note that, for any choice of $a_1,b_1, a_2, b_2$, the statistic $C_n$ is both scale and location invariant. In the following, for illustration purposes, we choose $a_1 = 0.0$, $b_1 = 0.4$, $a_2 = 0.8$, $b_2 = 0.95$.

To check the performance of $C_n$ we use the same alternative distributions which were considered in Section~\ref{sec4}. The results for the significance levels  $5\%$ and $1\%$ are presented in Table~\ref{tab:loc_test_5} and Table~\ref{tab:loc_test_1}, respectively. Note that they are based on 10 000 strong Monte Carlo samples and simulated rejection levels. In all cases, any shift applied to the considered alternatives (or null distribution) will not impact the test power due to location invariance of the test statistic $C_n$.

\begin{table}[ht]
\centering
\begin{tabular}{|l|rrrrrrr|}\hline
Alternative \textbackslash \, n & 20 & 30 & 50 & 80 & 100 & 120 & 250 \\\hline
Gamma(2,3) & 0.95 & 1.00 & 1.00 & 1.00 & 1.00 & 1.00 & 1.00 \\ 
  Chi-Squared(4) & 0.95 & 1.00 & 1.00 & 1.00 & 1.00 & 1.00 & 1.00 \\ 
  Weibull(1.75,1) & 0.99 & 1.00 & 1.00 & 1.00 & 1.00 & 1.00 & 1.00 \\ 
  Lognormal(0,1) & 0.59 & 0.84 & 0.99 & 1.00 & 1.00 & 1.00 & 1.00 \\ 
  Pareto(0.75,1.0) & 0.08 & 0.10 & 0.18 & 0.32 & 0.39 & 0.47 & 0.81 \\ 
  Rayleigh(1.0) & 1.00 & 1.00 & 1.00 & 1.00 & 1.00 & 1.00 & 1.00 \\ 
  Half-normal(1.0) & 0.96 & 1.00 & 1.00 & 1.00 & 1.00 & 1.00 & 1.00 \\ 
  Fr\'{e}chet(0.0,0.5,1.0) & 0.20 & 0.31 & 0.57 & 0.85 & 0.92 & 0.96 & 1.00 \\ 
  $|\textrm{Log-Gamma}(3,2)|$ & 1.00 & 1.00 & 1.00 & 1.00 & 1.00 & 1.00 & 1.00 \\ 
  Pareto(1.5,0.5) & 0.28 & 0.44 & 0.78 & 0.98 & 0.99 & 1.00 & 1.00 \\ 
  Inv-Gaussian(1.0,1.5) & 0.50 & 0.76 & 0.99 & 1.00 & 1.00 & 1.00 & 1.00 \\ 
  Burr(1.5,0.5,0.5) & 0.60 & 0.74 & 0.91 & 0.98 & 0.99 & 1.00 & 1.00 \\\hline
\end{tabular}
\caption{Power of the test $C_n$, obtained by empirical distribution given by 10\,000 strong Monte Carlo simulations at significance level $5\%$. }\label{tab:loc_test_5} 
\end{table}

\begin{table}[ht]
\centering
\begin{tabular}{|l|rrrrrrr|}\hline
Alternative \textbackslash \, n & 20 & 30 & 50 & 80 & 100 & 120 & 250 \\\hline
Gamma(2,3) & 0.61 & 0.93 & 1.00 & 1.00 & 1.00 & 1.00 & 1.00 \\
Chi-Squared(4) & 0.62 & 0.93 & 1.00 & 1.00 & 1.00 & 1.00 & 1.00 \\
Weibull(1.75,1) & 0.80 & 0.99 & 1.00 & 1.00 & 1.00 & 1.00 & 1.00 \\
Lognormal(0,1) & 0.21 & 0.44 & 0.95 & 1.00 & 1.00 & 1.00 & 1.00 \\
Pareto(0.75,1.0) & 0.02 & 0.03 & 0.06 & 0.11 & 0.16 & 0.22 & 0.60 \\
Rayleigh(1.0) & 0.87 & 1.00 & 1.00 & 1.00 & 1.00 & 1.00 & 1.00 \\
Half-normal(1.0) & 0.65 & 0.95 & 1.00 & 1.00 & 1.00 & 1.00 & 1.00 \\
Fr\'{e}chet(0.0,0.5,1.0) & 0.05 & 0.10 & 0.31 & 0.58 & 0.72 & 0.85 & 1.00 \\
$|\textrm{Log-Gamma(3,2)}|$ & 0.93 & 1.00 & 1.00 & 1.00 & 1.00 & 1.00 & 1.00 \\
Pareto(1.5,0.5) & 0.08 & 0.15 & 0.50 & 0.87 & 0.95 & 0.99 & 1.00 \\
Inv-Gaussian(1.0,1.5) & 0.17 & 0.35 & 0.91 & 1.00 & 1.00 & 1.00 & 1.00 \\
Burr(1.5,0.5,0.5) & 0.35 & 0.58 & 0.80 & 0.95 & 0.98 & 0.99 & 1.00 \\\hline
\end{tabular}
\caption{Power of the test $C_n$, obtained by empirical distribution given by 10\,000 strong Monte Carlo simulations at significance level $1\%$. }\label{tab:loc_test_1} 
\end{table}

For a substantial share of alternatives, the power of $C_n$ is comparable to the tests considered in Section~\ref{sec4}. However, for some alternatives, including Pareto and Fr\'{e}chet distributions, the power significantly deteriorates. This can be traced back to the fact that $C_n$ is created to work in the environment with a higher level of  uncertainty, i.e. where both location and scale are unknown. Nevertheless, it should be highlighted again that the tests considered in Section~\ref{sec4} cannot be directly applied in such a framework due to the lack of location invariance and the data non-negativity assumption. To sum up, we believe that the $C_n$ statistic is a promising alternative for generic frameworks but we leave a more systematic study of location invariant tests for future research since the main focus of this paper is set on the development of the framework for the full non-negative support.

\section{Application}\label{sec6}
In this section we show how to apply the proposed tests to real data analysis. We focus on the case study considered  in~\cite{KumBha2022}, where the benchmark test $V_n$ is introduced. The first data set, taken from \cite{Keating1990}, presents the failure time (in hours) of vessels constructed of fiber/epoxy composite materials exposed to high pressure ($n = 20$). According to \cite{Keating1990}, this data set might be modelled by the Gamma distribution with the shape parameter close to 0.5. Therefore, the inverse of these observations should follow the L\'evy distribution. The second data set comes from \url{www.data.gov.in} and describes the average rainfall in January (in mm) for India, for years from 1981 to 2011 ($n = 31$). For completeness, data set 1 and data set 2 are presented in Table~\ref{tab:data1} and~\ref{tab:data2}, respectively.

\begin{table}[ht]
\centering
\begin{tabular}{|rrrrrrrrrr|}
\hline
 274.00 & 1.70 & 871.00 & 1311.00 & 236.00 & 458.00 & 54.90 & 1787.00 & 0.75 & 776.00 \\ \hline
 28.50 & 20.80 & 363.00 & 1661.00 & 828.00 & 290.00 & 175.00 & 970.00 & 1278.00 & 126.00 \\ 
   \hline
\end{tabular}

% \begin{tabular}{ll}
% 274.00  & 28.50   \\
% 1.70    & 20.80   \\
% 871.00  & 363.00  \\
% 1311.00 & 1661.00 \\
% 236.00  & 828.00  \\
% 458.00  & 290.00  \\
% 54.90   & 175.00  \\
% 1787.00 & 970.00  \\
% 0.75    & 1278.00 \\
% 776.00  & 126.00 
% \end{tabular}
\caption{Entries for data set 1. The data present failure time (in hours) of vessels constructed of fiber/epoxy composite materials.}\label{tab:data1}
\end{table}

\begin{table}[ht]
\centering
\begin{tabular}{|rr|rr|rr|}
  \hline
 Year & Rainfall& Year & Rainfall& Year & Rainfall\\ 
  \hline
  1981 & 29.30  & 1992 & 16.00 & 2003 & 7.60    \\ 
  1982 & 23.80 & 1993 & 18.20 & 2004 & 25.70   \\ 
  1983 & 18.50  & 1994 & 25.00 & 2005 & 28.10  \\ 
  1984 & 19.00 & 1995& 31.30 &  2006   & 17.70  \\ 
  1985 & 23.20 & 1996& 22.90 & 2007  & 1.70  \\ 
  1986 & 15.50 &  1997& 14.30 & 2008   & 18.40  \\ 
  1987 & 13.20 & 1998 & 16.40 & 2009   & 12.00  \\ 
  1988 & 10.40 & 1999 & 13.70 & 2010 & 7.50 \\ 
  1989 & 15.40& 2000 & 18.40 & 2011   & 6.80 \\ 
  1990 & 16.00&  2001 & 7.30 & &\\ 
  1991 & 14.30 &  2002 & 15.70 & &\\ 
   \hline
\end{tabular}
\caption{Entries for data set 2. The data describe the average rainfall (in mm) in India in each January from 1981 to 2011.}\label{tab:data2}
\end{table}

\begin{figure}[htp!]
\begin{center}
\includegraphics[width=0.33\textwidth]{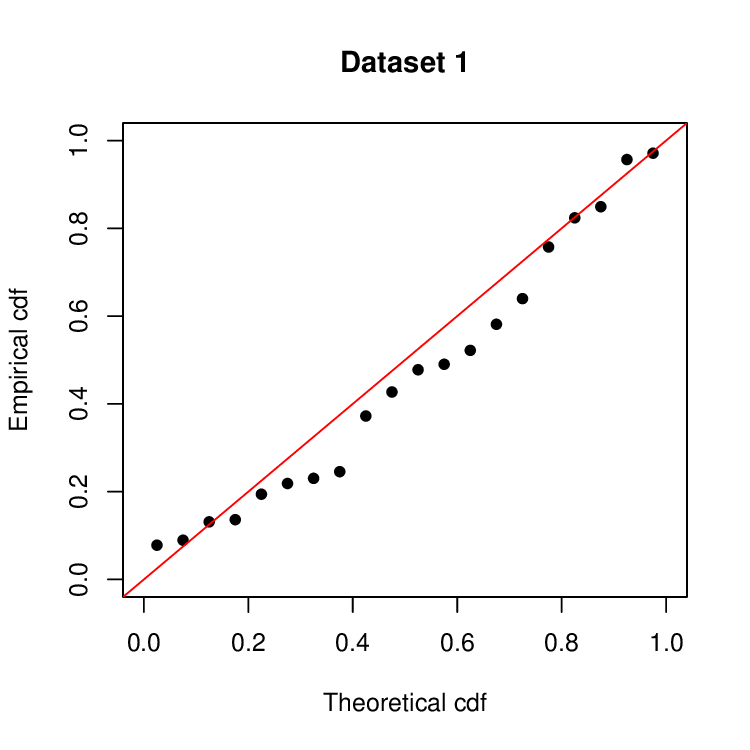}
\includegraphics[width=0.33\textwidth]{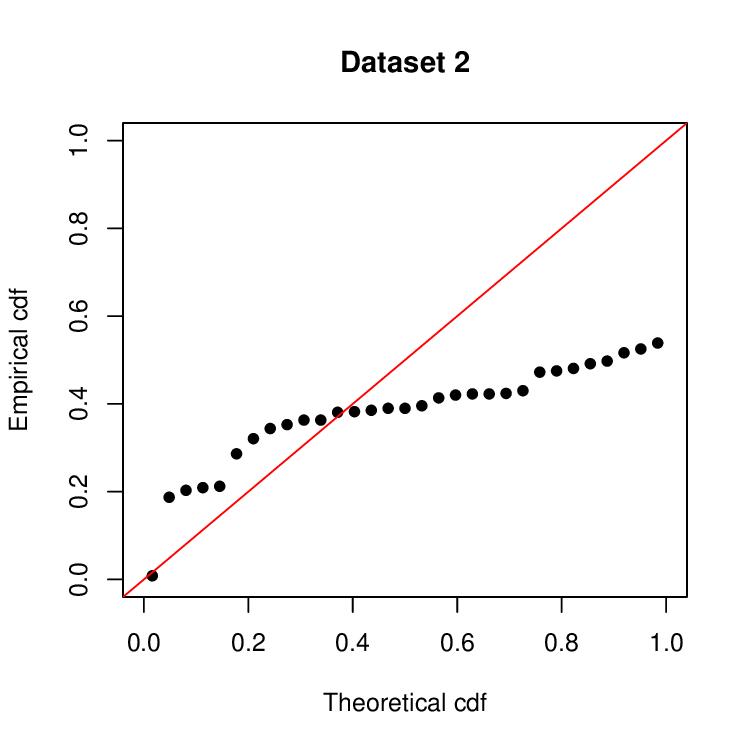}
\end{center}
\caption{The pp-plot between empirical cumulative distribution function and theoretical values of fitted one-sided L\'evy distribution.}
\label{fig:PPplot}
\end{figure}
%\FloatBarrier

The quick analysis of the data given by the PP-plot presented in Figure~\ref{fig:PPplot} suggests that the first data set could be effectively modelled by the one-sided L\'evy distribution. However, choosing this distribution for the second data set seems to be incorrect. This observation is confirmed by the results presented in Table~\ref{tab:application}, where the results for all considered test statistics are computed. More specifically, we applied tests $V_n$, $T_n$, $\Delta_n$, $R_{0.2,n}$, and $O_n$ to the considered data sets and calculated the corresponding (simulated) p-values. Note that, for each test, the calculations were based on strong Monte Carlo samples with 100 000 realisations of the underlying test statistic under the null hypothesis.

\begin{table}[ht]
\centering
\begin{tabular}{|r|rrrrr|}
\multicolumn{6}{c}{Dataset 1}\\
  \hline
 & $V_n$ & $T_n$ & $O_n$ & $\Delta_n$ & $R_{0.2,n}$  \\ 
  \hline
TS-value & 1.23 &  -0.04 &-0.01 & 0.08 & -0.35 \\ 
  $p$-value  & 0.56 & 0.91  & 0.49 & 0.90 & 0.82 \\ 
   \hline
\end{tabular}
\begin{tabular}{|r|rrrrr|}
\multicolumn{6}{c}{Dataset 2}\\
  \hline
 & $V_n$ & $\Delta_n$ & $R_{0.2,n}$ & $T_n$ & $O_n$  \\ 
  \hline
TS-value &11.16 & 0.32 & 0.40 & 7.37&  1.96 \\ 
  $p$-value  & $<10^{-4}$ & $<10^{-4}$  & $<10^{-4}$ & $<10^{-4}$  & $<10^{-4}$ \\ 
   \hline
\end{tabular}
\caption{Values of test statistics and the corresponding p-values for the considered data sets.}\label{tab:application}
\end{table}
\FloatBarrier

From the table, we get that there is no evidence to reject the one-sided L\'evy distribution hypothesis for the first data set. However, for the second data set, all three tests agreed to reject the null hypothesis. It should be noted that the results are consistent with the ones presented in~\cite{KumBha2022}. However, note that in this paper, a different normalisation of the test statistic $V_n$ is used which results in a different value of the test statistic. However, the estimated p-values remain unchanged.

\section{Conclusions}
In this paper, we have introduced a novel approach for the scale estimation and goodness-of-fit testing of one-sided L\'evy distribution. Our methodology is based on the quantile conditional moments that are properly defined for the considered distribution in contrast to the unconditional moments (expected value and variance) that are infinite for heavy-tailed distributions. This paper can be considered as the natural extension of our previous research where the methodology based on the quantile conditional variance was proposed for estimation and testing for the general class of \stab distributions. Here, we extended the conditional variance framework to non-symmetric setting and incorporated the sample quantile conditional moments into the scale-ratio statistics in which two estimators of the scale parameter were confronted in the case of one-sided L\'evy distribution. We have shown that the inclusion of the statistics based on quantile conditional moments into the testing scenario made the procedures more effective when confronted with the  approaches based on the ratio of likelihood-based and covariance-based  statistics.  We also studied the probabilistic properties of the introduced test statistics and showed the applications of the proposed methodology using the real data example.

Also, it should be noted that the testing framework based on test statistics $O_n$ and $T_n$ is much more flexible than the testing framework based on test statistic $V_n$. Namely, one can adjust the quantile splits, an input to test statistics $O_n$ and $T_n$, to tailor a test statistic to a specific setting. For example, if we are more interested in testing alternatives in which the tail structure is different, we might use more extreme quantiles in the test statistic definition. That saying, we demonstrated that one could find a generic quantile choice which works reasonably well under various alternatives. Also, we want to emphasize that our framework could be easily expanded to define new goodness-of-fit test statistics based on  (conditional) distribution characteristics, e.g. conditional entropy. This is left for future research.

\section*{Appendix}

In this section we present the consolidated result which aggregated the statements of Proposition~\ref{pr:benchmark}, Proposition~\ref{pr:On.stat}, and Proposition~\ref{pr:Tn.stat}. For brevity, we reformulate all propositions' statements in the consolidated setting and provide joint proof. Before we proceed, let us recall the notation and introduce a few auxiliary objects. As in Section~\ref{S:preliminaries}, for any $n\in \mathbb{N}\setminus \{0\}$, by $(X_1, \ldots, X_n)$ we denote an i.i.d. sample from $Lv(c)$ with some $c>0$. Also, we set $Y_i:=\frac{1}{X_i}$ and $Z_i:=\ln Y_i$, $i=1, \ldots, n$. Next, for a generic split $0<a<b<1$,  the quantile conditional mean $\mu_X(a,b)$, the quantile conditional variance $\sigma^2_{X}$, and their estimators $\hat\mu_X(a,b)$ and $\hat\sigma^2_{X}$, are given by~\eqref{eq:cond_mean_quant},~\eqref{eq:cond_var_quant},~\eqref{eq:estimQCM}, and~\eqref{eq:estimQCV}, respectively. Also, for a generic sequence of random variables $(W_i)_{i=1}^\infty$, we set $\bar{W}_n:=\frac{1}{n}\sum_{i=1}^n W_i$ and $S_{n,Y,Z}:=\frac{1}{n}\sum_{i=1}^n (Y_i-\bar{Y}_n)(Z_i-\bar{Z}_n)$. Finally, we set $\mu_Y:=\mathbb{E}[Y_1]$, $\sigma_{YZ}:=\cov(Y_1,Z_1)$, and we use $Q$ to denote the quantile function of $Lv(1)$ given by~\eqref{eq:Levy_Q}.

\begin{proposition}\label{pr:appendix}
Let $X_i$, $i = 1,2, \ldots$, be i.i.d. random variables following $Lv(c)$ with some $c>0$. Also, let $V_n$, $O_n$, and $T_n$ be given by~\eqref{eq:Vn},~\eqref{eq:On}, and~\eqref{eq:Tn}, respectively. Then, we get
\begin{align*}
    V_n &\overset{d}{\to}\mathcal{N}(0,\tau_V^2), \quad n \to \infty,\\
    O_n&\overset{d}{\to} \mathcal{N}(0, \tau_O^2), \quad n\to\infty,\\
    T_n&\overset{d}{\to} \mathcal{N}(0, \tau_T^2), \quad n\to\infty,
\end{align*}
where $\tau_V > 0$, $\tau_O>0$, and $\tau_T>0$ are constants independent of $c$.
\end{proposition}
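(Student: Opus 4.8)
The plan is to exploit that all three statistics are pivotal in the scale parameter, and then to reduce each of them to a smooth function of asymptotically linear statistics, so that a single multivariate delta-method argument covers all three simultaneously. This is exactly the setting in which it is cleanest to treat $V_n$, $O_n$, and $T_n$ together, as done in the consolidated statement.

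First I would record the scaling reduction, which delivers the $c$-independence of the variances essentially for free. Writing $X_i=c\,X_i'$ with $X_i'\sim Lv(1)$ gives $Y_i=Y_i'/c$ and $Z_i=Z_i'-\ln c$, whence $\hat c_{\mathrm{MLE}}=c\,\hat c_{\mathrm{MLE}}'$, $\hat c_{\mathrm{COV}}=c\,\hat c_{\mathrm{COV}}'$ (the additive shift $\ln c$ cancels in the centered products $(Z_i-\bar Z_n)$), and $\hat c_{\mathrm{QCM}}(a,b)=c\,\hat c_{\mathrm{QCM}}'(a,b)$, the last because order statistics scale linearly. Consequently the ratios defining $V_n$, $O_n$, and $T_n$ are invariant under the change $c\to 1$, so each statistic has a law that does not depend on $c$; in particular the limiting variances are automatically independent of $c$, and I may set $c=1$ throughout the remainder.

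Next I would put every building block on a common asymptotically linear footing. For the covariance-based part I would write $S_{n,Y,Z}=n^{-1}\sum_{i=1}^n Y_iZ_i-\bar Y_n\bar Z_n$, so that $\hat c_{\mathrm{COV}}/\hat c_{\mathrm{MLE}}=2\bar Y_n/(n^{-1}\sum_i Y_iZ_i-\bar Y_n\bar Z_n)$ is a smooth map of the sample averages $(\bar Y_n,\bar Z_n, n^{-1}\sum_i Y_iZ_i)$; these satisfy a joint CLT because $Y_1=1/X_1$ is Gamma distributed, so $Y_1$, $Z_1=\ln Y_1$, and $Y_1Z_1$ all have finite variance. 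For the quantile-conditional part I would invoke the asymptotic linearity of the windowed order-statistic sums $\hat\mu_X(a,b)$, namely the $L$-statistic representation $\sqrt n(\hat\mu_X(a,b)-\mu_X(a,b))=n^{-1/2}\sum_{i=1}^n\psi_{a,b}(X_i)+o_P(1)$, whose validity for these estimators is the CLT recalled after~\eqref{eq:estimQCV}; here $\psi_{a,b}$ is square-integrable since it is supported on the bounded conditioning window. Stacking these linearizations and applying the multivariate CLT via Cram\'er--Wold then yields joint asymptotic normality of the full vector of building blocks with some limiting covariance matrix $\Sigma$. Finally I would apply the delta method: each statistic has the form $\sqrt n(g(\hat\theta_n)-1)$ for a smooth $g$ with non-vanishing denominators, and $g$ evaluates to $1$ at the population point by consistency of the scale estimators (for the covariance piece this is the identity $\sigma_{YZ}=2\mu_Y$ underlying the consistency of $\hat c_{\mathrm{COV}}$ established in~\cite{KumBha2022}). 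The delta method gives convergence to $\mathcal N(0,\nabla g^\top\Sigma\,\nabla g)$, and I would set $\tau_V^2$, $\tau_O^2$, $\tau_T^2$ to these limiting variances; positivity follows because the relevant linear combination of influence functions is non-degenerate, i.e. not almost surely constant.

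The step I expect to be the genuine obstacle is the joint CLT, and specifically securing the asymptotically linear representation of the order-statistic-based $\hat\mu_X(a,b)$ in a form compatible with the ordinary sample averages. The marginal limit for each block is classical, but the statistic $T_n$ couples an $L$-statistic with empirical means, so the argument needs both kinds of estimators expressed on a common $n^{-1/2}\sum_i(\cdot)$ footing, together with control of the $L$-statistic remainder term, before Cram\'er--Wold can produce the joint limit and the correct cross-covariances.
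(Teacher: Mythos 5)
Your proposal is correct and follows essentially the same route as the paper's proof: stack the building blocks (the $L$-statistic representation of $\hat\mu_X(a,b)$ taken from~\cite{JelPit2018}, together with the sample means underlying $\hat c_{\textrm{MLE}}$ and $\hat c_{\textrm{COV}}$) into one asymptotically linear vector, apply the multivariate CLT, and finish with the delta method applied to the smooth ratio maps $g_V$, $g_O$, $g_T$. The only notable difference is that you make the scale-pivotality reduction to $c=1$ explicit at the outset (the paper relies on this implicitly, stating pivotality only in the main text) and you treat $S_{n,Y,Z}$ as a smooth function of sample moments rather than linearizing it with an explicit remainder term, both of which are cosmetic variations rather than a different argument.
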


\begin{proof}
The idea of the proof is to transform the problem into a multidimensional setting, then use the multivariate central limit theorem and finally apply the delta method. More specifically, we consider

\begin{align}\label{eq.CLT}
I_n:=\sqrt{n}
\begin{bmatrix}
\hat{\mu}_{X}(a_1,b_1) - \mu_X(a_1,b_1) \\
\hat{\mu}_{X}(a_2,b_2) - \mu_X(a_2,b_2) \\
\hat{\sigma}^2_{X} - \sigma_{X}^2 \\
\bar{Y}_n - \mu_{Y} \\
S_{n,Y,Z} - \sigma_{YZ}
\end{bmatrix}
\end{align}
and we show that $I_n$ converges in distribution to some multivariate Gaussian distribution, as $n\to\infty$. We start with providing alternative formulations of the components of $I_n$. Regarding the first component, following the proof of Lemma 2 from~\cite{JelPit2018}, we get
\begin{align}\label{eq:pr:appendix:1}
    \sqrt{n}(\hat{\mu}_X(a_1,b_1) - \mu_X(a_1,b_1)) = \frac{n}{[nb_1] - [na_1]}\frac{1}{n} \sum_{i = 1}^n A_i + r_n,
\end{align}
where $r_n \overset{\mathbb{P}}{\to} 0$ as $n \to \infty$ and, for any $i\in \mathbb{N}\setminus \{0\}$, we set
\begin{align*}
    A_i := (X_i-\mu_X(a_1,b_1))\mathbbm{1}_{\{Q(a_1)<X_i<Q(b_1)\}}+(\mathbbm{1}_{\{X_i < Q(a_1)\}} - a_1)(a_1-\mu_X(a_1,b_1))+(b_1-\mathbbm{1}_{\{X_i<Q(b_1)\}})(b_1-\mu_X(a_1,b_1)).
\end{align*}
Similarly, for the second component, we get
\begin{align}\label{eq:pr:appendix:1.5}
    \sqrt{n}(\hat{\mu}_X(a_2,b_2) - \mu_X(a_2,b_2)) = \frac{n}{[nb_2] - [na_2]}\frac{1}{n} \sum_{i = 1}^n B_i + s_n,
\end{align}
where $s_n \overset{\mathbb{P}}{\to} 0$ as $n \to \infty$ and $B_i$ is a version of $A_i$ with $a_1$ and $b_1$ replaced by $a_2$ and $b_2$, respectively. Next, combining Lemma 3 and Lemma 4 from~\cite{JelPit2018}, we get
\begin{align}\label{eq:pr:appendix:2}
\sqrt{n}(\hat{\sigma}^2_n - \sigma^2_{X}) = \sqrt{n}\frac{n}{[nb] - [na]} \frac{1}{n}\sum_{i =1}^{n} C_i + t_n,
\end{align}
where $t_n \overset{\mathbb{P}}{\to} 0$ as $n \to \infty$ and, for any $i\in \mathbb{N}\setminus \{0\}$,
\begin{multline*}
C_i := \left(\left(X_i - \mu(a,b)\right)^2-\sigma^2(a,b)\right)\mathbbm{1}_{\{Q(a) < X_i < Q(b)\}}+ \left(\mathbbm{1}_{\{X_i \leq Q(a)\}} - a\right)\left( \left(Q(a) - \mu(a,b)\right) - \sigma^2(a,b)\right) \\
+ \left(b-\mathbbm{1}_{\{X_i \leq Q(b) \}}\right)\left(\left(Q(b)-\mu(a,b)\right)^2 - \sigma^2(a,b)\right).
\end{multline*}
Setting $D_i:=Y_i-\mu_Y$, $i\in \mathbb{N}\setminus \{0\}$, we also get
\begin{equation}\label{eq:pr:appendix:3}
    \sqrt{n}\left(\bar{Y}_n - \mu_{Y}\right) = \sqrt{n}\frac{1}{n}\sum_{i=1}^n D_i.
\end{equation}
Finally, by direct computation, we get
\begin{align}\label{eq:pr:appendix:4}
\sqrt{n} \left(S_{n,Y,Z} - \sigma_{YZ}\right) = \sqrt{n}\frac{1}{n} \sum_{i =1}^n E_i+v_n
\end{align}
where, for any $i\in \mathbb{N}\setminus \{0\}$, we set
\begin{align*}
    E_i&:=(Y_i-\mu_Y)(Z_i-\mu_Z) - \sigma_{YZ},\\
    v_n&:=-\sqrt{n}(\bar{Y}_n-\mu_Y)(\bar{Z}_n-\mu_Z).
\end{align*}
% \\
% =& \frac{1}{n} \sum_{i=1}^n (Y_i-\mu_Y)(Z_i-\mu_Z) + r_n,
By the law of large numbers, we get $\bar{Z}_n\overset{\mathbb{P}}{\to}\mu_Z$, while from the central limit theorem, we get that $\sqrt{n}(\bar{Y}_n-\mu_Y)$ converges in law to a non-degenerate Gaussian distribution as $n\to\infty$. These observations combined with Slutsky's theorem imply that $v_n\overset{\mathbb{P}}{\to}0$, as $n\to\infty$.

Now, combining~\eqref{eq:pr:appendix:1}--\eqref{eq:pr:appendix:4}, we get
\begin{align}
I_n &=M_n
\sqrt{n} \frac{1}{n} \sum_{i = 1}^n
F_i
+  g_n, \quad n\in \mathbb{N},
\end{align}
where, for any $n\in \mathbb{N}$, we set $F_n:=[A_n, B_n, C_n, D_n, E_n]^T$, $g_n:=[r_n, s_n, t_n, 0, v_n]^T$, and $M_n$  is $4\times 4$ diagonal matrix with the main diagonal given by $[\frac{n}{[nb] - [na]},\frac{n}{[nb] - [na]},0,0,0]$. In particular, we get that $(F_n)$ is a family of i.i.d. square integrable random vectors with zero expectation and $F_n\overset{\mathbb{P}}{\to} 0$, as $n\to\infty$. Thus, noting that $M_n$ converges to the diagonal matrix $ M$ with the main diagonal $[\frac{1}{b-a},\frac{1}{b-a},0,0,0]$ and using the multivariate central limit theorem combined with Slutsky's theorem, we get that
$I_n \overset{d}{\to}\mathcal{N}(\begin{bmatrix} 0, 0, 0,0,0\end{bmatrix}^T, \Sigma)$, as $n\to\infty$, where 
\begin{align*}
\Sigma := M
\begin{bmatrix}
D^2(A_1) & \cov(A_1, B_1) & \cov(A_1,C_1) & \cov(A_1,D_1) & \cov(A_1,E_1) \\
\cov(B_1, A_1) & D^2(B_1) & \cov(B_1,C_1) & \cov(B_1,D_1) & \cov(B_1,E_1)\\
\cov(C_1,A_1) & \cov(C_1,B_1) & D^2(C_1) & \cov(C_1,D_1) & \cov(C_1,E_1) \\
\cov(A_1,D_1) & \cov(B_1,D_1) & \cov(C_1,D_1) & D^2(D_1) & \cov(D_1,E_1)\\
\cov(A_1,E_1) & \cov(B_1,E_1) & \cov(C_1,E_1) & \cov(D_1,E_1) &  D^2(E_1)
\end{bmatrix} M.
\end{align*}

Next, let us define $H_n:=[\hat\mu_X(a_1,b_1), \hat\mu_X(a_2,b_2), \hat\sigma^2_X, \bar{Y}_n, S_{n,Y,Z}]^T$, $H:=[\mu_X(a_1,b_1), \mu_X(a_2,b_2), \sigma^2_X, \mu_Y, \sigma_{Y,Z}]^T$, and note that $I_n=\sqrt{n}(H_n-H)\overset{d}{\to}\mathcal{N}([0,0,0,0,0]^T,\Sigma)$. In this setting, we get
\begin{align*}
    V_n&=\sqrt{n}(g_V(H_n)-g_V(H)), \quad \text{where} \quad g_V(x,y,z,w,v) = \frac{2w}{v},\\
    O_n&=\sqrt{n}(g_O(H_n)-g_O(H)), \hat{\sigma}^2(a,b)_X, \bar{Y},S_{n,Y,Z}), \quad \text{where} \quad g_O(x,y,z,w,v) = \frac{x \cdot \mu_{Lv(1)}(a_2,b_2)}{y \cdot \mu_{Lv(1)}(a_1,b_1)},\\
    T_n&=\sqrt{n}(g_T(H_n)-g_T(H)), \quad \text{where} \quad g_T(x,y,z,w,v)=\frac{w}{v} + \frac{x w}{2\mu_{Lv(1)}(a_1,b_1)}.
\end{align*}
Thus, using the delta method (see e.g. Theorem 7 in~\cite{Ferguson1996}), we get
\begin{align*}
V_n &\overset{d}{\to}\mathcal{N}(0,(\nabla g_V(H))^T \Sigma \nabla g_V(H)), \quad n \to\infty,\\
O_n &\overset{d}{\to}\mathcal{N}(0,(\nabla g_O(H))^T \Sigma \nabla g_O(H)), \quad n \to\infty,\\
T_n &\overset{d}{\to}\mathcal{N}(0,(\nabla g_T(H))^T \Sigma \nabla g_T(H)), \quad n \to\infty,
\end{align*}
which concludes the proof. 
\end{proof}

%%%%%%%%%%%%%%%%%%%%%%%%%%%%%%%%%%%%%%%%%%%%%%%%%%%%%%%%%%%%%%%%%%%%%%%%%

\section*{Acknowledgements}
Marcin Pitera and Agnieszka Wy\l{}oma\'{n}ska acknowledge support from the National Science Centre, Poland, via project 2020/37/B/HS4/00120. Part of the work of Damian Jelito was funded by the Priority Research Area Digiworld under the program Excellence Initiative – Research University at the Jagiellonian University in Krak\'{o}w.

\begin{footnotesize}
\bibliographystyle{abbrvnat}
\bibliography{mybibliography}
\end{footnotesize}

%%%%%%%%%%%%%%%%%%%%%%%%%%%%%%%%%%%%%%%%%%%%%%%%%%%%%%%%%%%%%%%%%%%%

\end{document}